\numberwithin{equation}{section}
\newtheorem{theorem}{Theorem}[section]
\newtheorem{lemma}[theorem]{Lemma}
\newtheorem{proposition}[theorem]{Proposition}
\newtheorem{definition}[theorem]{Definition}
\newtheorem{fact}[theorem]{Fact}
\newtheorem{problem}[theorem]{Problem}
\newcommand{\Rmnum}[1]{\expandafter\@slowromancap\romannumeral #1@}
\begin{document}

\title{Tolerant Quantum Junta Testing}

\author{Zhaoyang Chen\thanks{chenchy76@mail2.sysu.edu.cn}}

\author{Lvzhou Li\thanks{lilvzh@mail.sysu.edu.cn}}

\author{Jingquan Luo\thanks{luojq25@mail2.sysu.edu.cn}}
\affil{Institute of Quantum Computing and Software, \\School of Computer Science and Engineering, \\Sun Yat-sen University, Guangzhou 510006, China}
\date{}
\maketitle

\begin{abstract}
Junta testing for Boolean functions has sparked a long line of work over recent decades in theoretical computer science, and recently has also been studied for unitary operators in quantum computing. Tolerant junta testing is more general and challenging than the standard version. While optimal tolerant junta testers have been obtained for  Boolean functions, there has been  no knowledge about tolerant junta testers for unitary operators, which was thus left as an open problem in [Chen,  Nadimpalli, and Yuen, SODA2023]. In this paper, we settle this problem by presenting the first algorithm to decide  whether a unitary is $\epsilon_1$-close to some quantum $k$-junta or is $\epsilon_2$-far from any quantum $k$-junta, where an $n$-qubit unitary $U$ is called a quantum $k$-junta if it only non-trivially acts on just $k$ of the $n$ qubits.
More specifically, we present a tolerant tester with $\epsilon_1 = \frac{\sqrt{\rho}}{8} \epsilon$, $\epsilon_2 = \epsilon$, and  $\rho \in (0,1)$, and the query complexity is $O\left(\frac{k \log k}{\epsilon^2 \rho (1-\rho)^k}\right)$, which demonstrates a trade-off between the amount of tolerance and the query complexity. Note that our algorithm is non-adaptive which is preferred over its adaptive counterparts,  due to its simpler as well as highly parallelizable nature. At the same time, our algorithm does not need access to $U^\dagger$, whereas this is usually required in the literature.

\end{abstract}

\section{Introduction}
Characterization of the dynamical behavior of quantum systems is a significant task in physics. In general, quantum process tomography \cite{chuang1997prescription,poyatos1997complete} can be used for extracting  information about the quantum operation, yet it consumes large amounts of resources. When provided with oracle access to an $n$-qubit unitary operator, the process necessitates $\Omega(4^n)$\cite{gutoski2014process} queries to the oracle, for deriving a classical representation of the unitary. However, in many situations, we merely want to know whether the unitary operator satisfies a certain property or is far from having the property, rather than acquiring a complete understanding of the unitary. This can potentially lead to a considerable decrease in query complexity. Similar problems have been studied in both classical\cite{goldreich1998property,blum1990self} and quantum\cite{montanaro2013survey} settings, within the framework called \textit{property testing} outlined below.

\paragraph{Property Testing} Let $\mathcal{X}$ be a set of objects and define a distance metric $\mathrm{dist}:\mathcal{X} \times \mathcal{X} \rightarrow [0,1]$. A subset $\mathcal{P} \subseteq \mathcal{X}$ is referred to as a property. We say that an object $x$ is $\epsilon$-close to $\mathcal{P}$ if there exists a element $y \in \mathcal{P}$ such that $\mathrm{dist}(x,y) \le \epsilon$; conversely, $x$ is $\epsilon$-far from $\mathcal{P}$ if $\mathrm{dist}(x,y) > \epsilon$ for every $y \in \mathcal{P}$. A property tester for $\mathcal{P}$ receives either an object $x\in \mathcal{P}$ or $x$ that is $\epsilon$-far from $\mathcal{P}$. In the former case, The tester will accept the input with high probability, while in the latter case, it will reject with high probability.

There has been a lot of research into efficient testers for various properties and proofs demonstrating that specific properties are not efficiently testable within the realm of classical computing\cite{blum1990self,goldreich1998property,fischer2004art,ron2008property,10.5555/1980617}. In contrast,  there is a relative scarcity of studies focused on quantum algorithms for property testing. Nevertheless, the field of quantum property testing has received increasing attention in recent years, including quantum testing of classical properties\cite{bernstein1993quantum,ambainis2007quantum,grover1996fast}, classical testing of quantum properties\cite{Krauthgamer2003PropertyTO,van2000self} and quantum testing of quantum properties\cite{wang2011property,harrow2013testing}, as highlighted in \cite{montanaro2013survey}.

A standard property testing algorithm is only guaranteed to accept the given object, if the object \textit{exactly} satisfies the property, which is somewhat restrictive. What if one wishes to accept objects
that are close to the desired property? Such instances tend to appear due to noise and other imperfections. To address this
question, Parnas et al.\cite{parnas2006tolerant} introduced a intuitive generalization of the standard property testing model, called \textit{tolerant property testing}. In this model, the tester is expected to accept any object if it is \textit{close} to the desired property while rejecting those that are significantly distant from having the property. \Cref{fig:tolerant_diagram} provides an illustration of the tolerant property testing problem. 

\begin{definition} [Tolerant property testing]
    For constants $0 < \epsilon_1 < \epsilon_2 < 1$ and a property $\mathcal{P}$, a $(\epsilon_1,\epsilon_2)$ tolerant tester for $\mathcal{P}$ is an algorithm that given a object $x$,
    \begin{itemize}
        \item accepts with probability at least $2/3$ if $x$ is $\epsilon_1$-close to $\mathcal{P}$;
        \item rejects with probability at least $2/3$ if $x$ is $\epsilon_2$-far from $\mathcal{P}$.
    \end{itemize}
\end{definition}

    \begin{figure}[htb]
    \centering
    \includegraphics[width=0.5\linewidth]{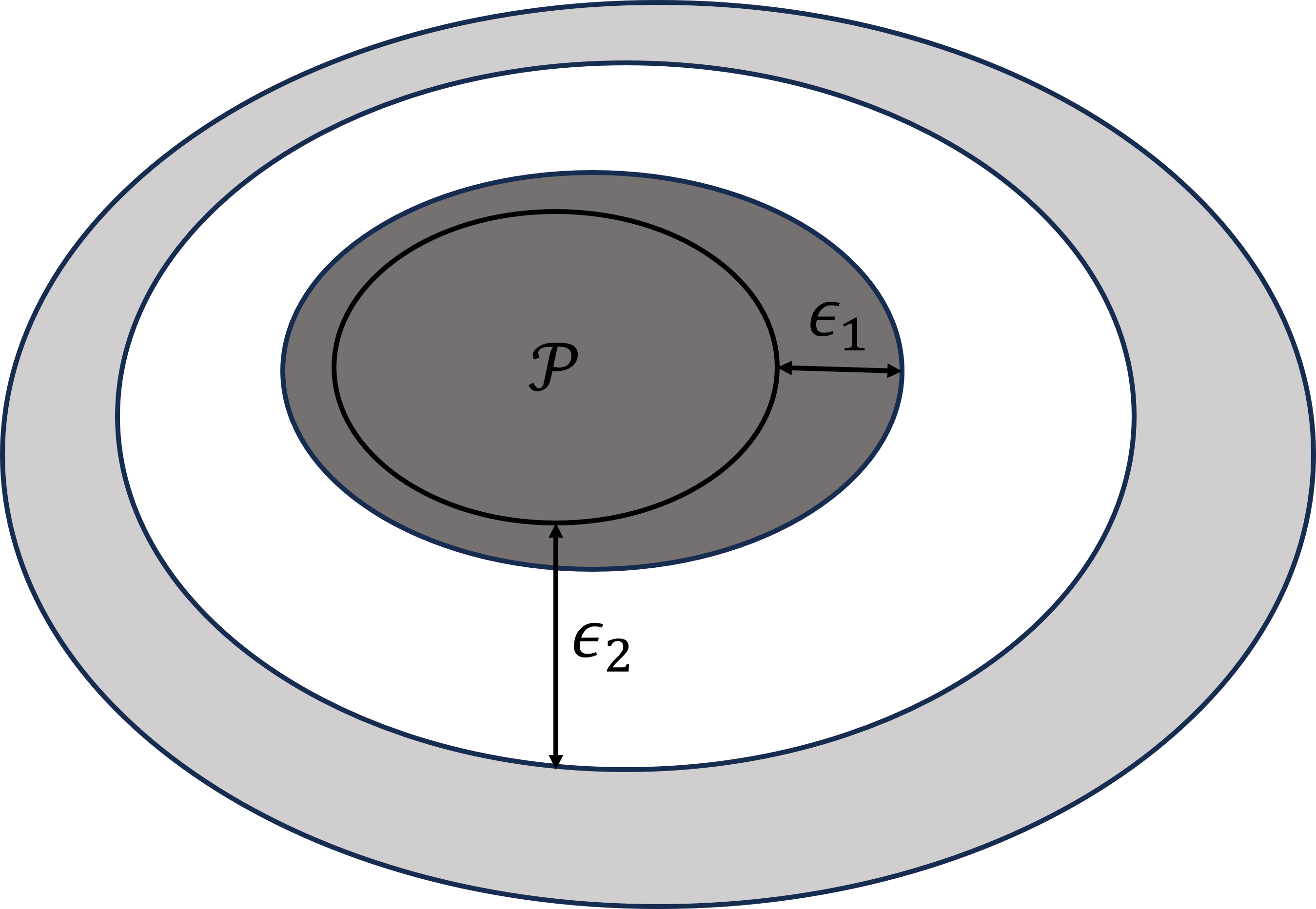}
    \caption{A schematic diagram of tolerant property testing. The outside oval represents the set of objects $\mathcal{X}$ and the innermost oval represents the desired property. The role of the tolerant tester is to distinguish between the dark grey and the light grey areas.}
    \label{fig:tolerant_diagram}
    \end{figure}

The problem of tolerant property testing could be much more challenging than the version of standard property testing. For instance, classically, the nearly tight upper bound $O(k\log k + k/\epsilon)$ for testing $k$-juntas was established by Blais\cite{blais2009testing} in 2009, whereas the tight bound $2^{\widetilde{O}(\sqrt{k\log (1/\epsilon)})}$ for tolerant $k$-junta testing  was provided by Nadimpalli and Patel\cite{nadimpalli2024optimal} until very recently. There has also been some research on junta testing via quantum algorithms (refer to \Cref{section:related work} for details). However, as previously mentioned by \cite{chen2023testing}, there has been a lack of investigation into tolerant junta
testing in quantum computing.

\subsection{Our result}
This paper focuses on the problem of quantum junta testing.  An $n$-qubit unitary $U$ is called a \textit{ quantum $k$-junta} if it only non-trivially acts on just $k$ of the $n$ qubits (for a formal statement, refer to \Cref{definition:quantum k-junta}). 
Given two unitaries $U$ and $V$:
  \begin{itemize}
        \item $U$ is said to be  \textit{$\epsilon$-close} to $V$, if $\mathrm{dist}(U,V) \le\epsilon$,
\item  $U$ is \textit{$\epsilon$-far} from $V$, if $\mathrm{dist}(U, V) > \epsilon$. 
\end{itemize}
The distance metric $\mathrm{dist}(.,.)$ is given in Definition  \ref{definition:distance metric}. In this paper, we investigate the issue of tolerant quantum junta testing given in the following.

\begin{problem} [Tolerant quantum junta testing] \label{problem:toleran testing}
    For parameters $0< \epsilon_1 < \epsilon_2 < 1$ and $k \ge 1$, when provided with oracle access to an $n$-qubit unitary $U$, the task is to determine if $U$ is $\epsilon_1$-close to some quantum $k$-junta or if $U$ is $\epsilon_2$-far from every quantum $k$-junta.
\end{problem}
 Our result for the above problem is stated below. 
\begin{theorem} \label{theorem:main result}
    For parameters $\epsilon \in (0,1)$, $\rho \in (0,1)$ and $k \ge 1$, there exists an algorithm that given oracle access to a unitary $U$  on $n$ qubits,  satisfies the following guarantees:
    \begin{itemize}
        \item If $U$ is $\frac{\sqrt{\rho}}{8} \epsilon$-close to a quantum k-junta,  the algorithm outputs accept with probability at least $2/3$;
        \item If $U$ is $\epsilon$-far from every quantum $k$-junta, the algorithm outputs reject with probability at least $2/3$.
    \end{itemize}
    The query complexity of this algorithm is $O\left(\frac{k \log k}{\epsilon^2 \rho (1-\rho)^k}\right)$.
\end{theorem}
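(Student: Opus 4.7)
My plan is to reduce the problem to estimating the Pauli spectrum of $U$ and to use a coordinate-wise noise step to trade tolerance for query complexity. Expand $U$ in the Pauli basis as $U=\sum_{s\in\{0,1,2,3\}^n}\hat U(s)\,\sigma_s$; unitarity gives $\sum_s |\hat U(s)|^2 = 1$, so $\{|\hat U(s)|^2\}_s$ is a probability distribution on $n$-qubit Paulis. A quantum $k$-junta on core $S\subseteq[n]$ of size $k$ is precisely a unitary whose Pauli spectrum is supported on $\{s:\mathrm{supp}(s)\subseteq S\}$, and under the paper's distance metric the squared distance from $U$ to its nearest such junta equals the tail mass $W(\overline S):=\sum_{s:\,\mathrm{supp}(s)\not\subseteq S}|\hat U(s)|^2$. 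The decision problem therefore reduces to comparing $\min_{|S|=k}W(\overline S)$ to the two thresholds $\rho\epsilon^2/64$ and $\epsilon^2$.

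The sampling primitive is Pauli sampling through the Choi--Jamio\l{}kowski state $(U\otimes I)|\Phi\rangle$: one forward query to $U$ followed by a Bell-basis measurement returns a sample $s\sim|\hat U(s)|^2$, which is non-adaptive and never invokes $U^\dagger$. On top of this I would apply a classical masking step, independently zeroing out each entry of the sampled $s$ with probability $\rho$ and keeping it with probability $1-\rho$. After masking, the marginal probability that qubit $i$ is labelled nonzero is $(1-\rho)\,\mathrm{Inf}_i(U)$ with $\mathrm{Inf}_i(U):=\sum_{s:\,s_i\ne 0}|\hat U(s)|^2$, and the probability that a masked sample has support exactly $T$ equals $\sum_{s\supseteq T}|\hat U(s)|^2\,(1-\rho)^{|T|}\rho^{|\mathrm{supp}(s)|-|T|}$. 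I would then draw $N=O(k\log k/[\epsilon^2\rho(1-\rho)^k])$ such masked samples, rank the qubits by their empirical nonzero frequency, take $\hat S$ to be the top $k$, and accept iff the empirical noisy mass of samples whose support escapes $\hat S$ lies below a calibrated threshold of order $(1-\rho)\epsilon^2$.

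Completeness and soundness then both follow from Chernoff bounds on the masked empirical statistics. In the close case, $W(\overline{S^*})\le\rho\epsilon^2/64$ for some $S^*$ of size $k$, so the per-qubit noisy signal outside $S^*$ is uniformly small; the genuine core is recovered in the top-$k$ ranking and the tail test passes. In the far case, every $|S|=k$ satisfies $W(\overline S)>\epsilon^2$, so whatever $\hat S$ the algorithm selects, the noisy escape mass still dominates the threshold by a $(1-\rho)$ factor and the test fails. The main obstacle is the joint calibration of the acceptance threshold and the sample size: Pauli terms of full $k$-qubit support in the core are attenuated by $(1-\rho)^k$ under masking, so detecting them reliably forces the $\Omega(1/(1-\rho)^k)$ factor in $N$; the $k\log k$ factor is of coupon-collector type from a union bound over individual qubits, and maintaining the precise tolerance ratio $\sqrt{\rho}/8$ between the two thresholds requires setting the intermediate concentration tolerances as careful functions of $\rho$.
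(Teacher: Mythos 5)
Your sampling primitive (Pauli sampling through the Choi--Jamio\l{}kowski state followed by Bell measurements) is exactly the paper's, and the idea of smearing the estimate over biased sub-selections to share samples is the right instinct. But the selection rule you propose---rank qubits by empirical individual influence, take the top $k$, and test only that one set $\hat S$---is where the proof breaks. Nothing guarantees the ranking recovers a good core, and in the worst case the bound degrades by a factor of $k$. Concretely, let $S^*=\{1,\dots,k\}$, $B=\{k+1,\dots,2k\}$, $\delta=\rho\epsilon^2/64$, and take $U$ with Pauli mass $\delta$ on each single-qubit term $\sigma_{e_a}$, $a\in S^*$, mass $\delta$ on the single $k$-body term $\sigma_B$, and the remaining mass $1-(k+1)\delta$ on $\sigma_0$. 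Then $\mathbf{Inf}_U[\overline{S^*}]=\delta$ (only $\sigma_B$ escapes), so $U$ is close to a $k$-junta with core $S^*$, but every qubit in $S^*\cup B$ has identical individual influence $\delta$, so the top-$k$ ranking may return $\hat S=B$, giving $\mathbf{Inf}_U[\overline{\hat S}]=k\delta$. After your masking, the expected noisy escape mass with $\hat S=B$ is $k\delta(1-\rho)=\tfrac{k\rho}{64}(1-\rho)\epsilon^2$, and you need this to sit below your $\Theta((1-\rho)\epsilon^2)$ acceptance threshold, which forces $\rho=O(1/k)$. The theorem, however, is claimed for all $\rho\in(0,1)$, so the masking does not rescue the greedy choice. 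Your assertion that ``the genuine core is recovered in the top-$k$ ranking'' is the missing step, and it is simply false.

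The paper sidesteps this entirely by never committing to a single candidate set. It first randomly coarsens $[n]$ into $l=24k^2$ blocks and reduces to testing \emph{$k$-part juntas} with respect to this partition: \Cref{lemma:approximate be k-part} shows closeness survives the coarsening, and \Cref{lemma:violate be k-part} uses the fact that the family $\{T:\mathbf{Inf}_U[\overline T]\le\epsilon^2/4\}$ is $(k+1)$-intersecting (via \Cref{lemma:biased measure bound}) to show farness also survives with probability $5/6$. It then \emph{enumerates all} $\binom{l}{k}=2^{(1+o(1))k\log l}$ candidate unions of $k$ blocks and estimates $\mathbf{Inf}_U[\phi_\mathcal{I}(S)]$ for each one, sharing the $m$ biased samples across all candidates (\Cref{algorithm:rho influence estimator}). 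This is also what legitimately produces the $k\log k$ factor: the union bound is over $2^{O(k\log k)}$ sets of blocks, not over the $n$ qubits. In your scheme the union bound for the ranking step is over $n$ individual qubits, which would give a $\log n$ (not $k\log k$) factor and make the complexity depend on $n$. Finally, your rationale for the $(1-\rho)^k$ term (``detecting full-$k$-support terms in the core'') does not match what is actually happening: in the paper it arises as the probability that a random $\rho$-biased subset of $[l]$ avoids a fixed set of $k$ blocks so that the same influence sample can be reused for the candidate $S$ of size $l-k$. To repair the proposal you would need to replace the greedy top-$k$ selection with the partition-plus-enumeration mechanism (or some other device that tests many candidate cores simultaneously) and move the union bound onto a $k$-dependent candidate family.
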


There are some remarks:

(i) To our best knowledge, this is the first result on tolerant  testing  $k$-juntas in quantum computing. It has been noted  in \cite{chen2023testing} ``there has been no work on tolerant property testing—
for both Boolean functions as well as unitary matrices—via quantum algorithms''.  Actually, the problem of tolerant property testing is  much more challenging than the version of standard property testing, which could be reflected by:  classically, the nearly tight upper bound for testing $k$-juntas was established by Blais\cite{blais2009testing} in 2009, whereas the tight bound  for tolerant $k$-junta testing  was provided by Nadimpalli and Patel\cite{nadimpalli2024optimal} until very recently; quantumly, there is blank about telorant testing via quantum algorithms for both Boolean functions and quantum operations except for our result as shown in  \Cref{table:related work}.

(ii) Our algorithm is non-adaptive and requires no access to $U^\dagger$. There is a lot of work in classical computing dedicated to designing non-adaptive algorithms for $k$-junta testing (e.g., \cite{de2019junta,nadimpalli2024optimal}), since non-adaptive algorithms are preferred over their adaptive counterparts,
due to their simpler as well as highly parallelizable nature.  The existing quantum testers for both Boolean $k$-juntas \cite{ambainis2007quantum} and  quantum $k$-juntas \cite{chen2023testing} are adaptive and require  access to $U^\dagger$ as well as $U$, since they use quantum amplitude amplification. As noted in \cite{montanaro2013survey},  the assumption of access to $U^\dagger$  may not be reasonable in an adversarial scenario where we only assume access to $U$ as a black box. Luckily, our algorithm  is non-adaptive and requires no access to $U^\dagger$.

(iii) Although our algorithm does not fully resolve the tolerant testing problem, it achieves a trade-off between the query complexity and the amount of tolerance. When $\rho = \Omega(1)$, the algorithm is capable of distinguishing between unitaries that are $O(\epsilon)$-close to some quantum $k$-junta and unitaries that are $\epsilon$-far from every  quantum $k$-junta, with query complexity $2^{O(k)}/\epsilon^2$. When $\rho = O(1/k)$, the algorithm can differentiate between unitaries that are $O(\epsilon/\sqrt{k})$-close to some quantum $k$-junta and unitaries that are $\epsilon$-far from  every quantum $k$-junta, with query complexity $\widetilde{O}(k^2/\epsilon^2)$.
We hope this will spark more discussion on quantum tolerant testing.

\subsection{Technique Overview }
 
Our tolerant quantum junta tester  makes use of the notion of {\it influence of qubits on a unitary}\footnote{We use $\mathbf{Inf}_U[S]$ to denote the influence of the qubits from $S$ on unitary $U$, and the detailed  definition is given in \Cref{definition:influence of qubits}},  first introduced by Montanaro and Osborne \cite{montanaro2010quantum}  in the context of Hermitian unitary matrices, and  playing a crucial role in the standard version of quantum junta testing \cite{chen2023testing}. More importantly, we appeal to the concept of {\it quantum $k$-part junta} defined in this paper which is a quantum analog to $k$-part junta in  \cite{blais2019tolerant}.

Our work is mainly inspired by \cite{blais2019tolerant} which is for  torelant testing of Boolean $k$-juntas, but  we have to  extend some crucial steps  carefully and skillfully to the quantum realm.
The main idea behind our algorithm is as follows. 
\begin{enumerate}
\item [(i)]\textbf{Reduction from testing quantum $k$-juntas to testing quantum $k$-part juntas.} We first randomly partition the $n$ qubits into $l=O(k^2)$ parts denoted as $\mathcal{I}=\{I_1,\dots,I_l\}$. Then we show the following:  if  $U$ is close to some quantum $k$-junta, then it is close to some quantum $k$-part junta with respect to $\mathcal{I}$, and on the contrary, if $U$ is far from every quantum $k$-junta, then it is far from every quantum $k$-part junta with respect to $\mathcal{I}$ (see \Cref{proposition:reduction to part juntas}). Therefore, the problem is now to determine whether $U$ is close to some quantum $k$-part junta, which  in turn is equivalent to determining whether there  exist $k$ parts out of $\mathcal{I}$ such that the  qubits  from  their complement  has  a small influence (see  \Cref{definition:partition quantum juntas}). To determine if there exist such  $k$ parts, one can exhaustively consider all subsets $S\subseteq [l]$ with  $|S|=l-k$ and estimate the influence $\mathbf{Inf}_U[\phi_{\mathcal{I}}(S)]$ of the set $\phi_{\mathcal{I}}(S) = \bigcup_{i\in S} I_i$. 
\item [(ii)]\textbf{Reducing the query complexity by $\rho$-biased subset.} If one estimate $\mathbf{Inf}_U[\phi_{\mathcal{I}}(S)]$  for each $S$ by a naive strategy, then it will lead to a high query complexity to $U$. The query complexity can be reduced if one query to $U$ can be repeatedly used to estimate the influence  $\mathbf{Inf}_U[\phi_{\mathcal{I}}(S)]$ for different choices of $S$.
For this reason, we introduce the concept of \textit{$\rho$-biased subset} as follows.
Consider a set $S\subseteq [l]$ and a parameter $\rho \in (0,1)$. A random $\rho$-biased subset $S'\sim_\rho S$ is generated by adding each index from $S$ into $S'$ with probability $\rho$. We can prove that, for any set $S \subseteq [l]$, the expected value of $\mathbf{Inf}_U[\phi_{\mathcal{I}}(S')]$ satisfies \begin{align}\frac{\rho}{3} \mathbf{Inf}_U[\phi_\mathcal{I}(S)] \le \mathbf{E}_{S'\sim_\rho S}\left[\mathbf{Inf}_U[\phi_{\mathcal{I}}(S')]\right] \le \mathbf{Inf}_U[\phi_\mathcal{I}(S)],\label{eq1}\end{align}
which is given in \Cref{lemma:rho influence and influence}.
\item[(iii)] \textbf{Estimating the expected value of  the $\rho$-biased subset's influence.} We propose an algorithm to estimate $\mathbf{E}_{S'\sim_\rho S}\left[\mathbf{Inf}_U[\phi_{\mathcal{I}}(S')]\right]$ for all $S\subseteq [l]$ with size $l-k$, which thus implies an estimation for $\mathbf{Inf}_U[\phi_{\mathcal{I}}(S)]$ according to \Cref{eq1}. (See \Cref{algorithm:rho influence estimator}.)
\end{enumerate}

Put all the above together, we obtain a tolerant tester for quantum $k$-juntas. 

\subsection{Related Work} \label{section:related work}

A Boolean function $f:\{0,1\}^n \rightarrow \{0,1\}$ is a $k$-junta if its value only depends on at most $k$ variables. Such functions are often studied both in computational learning theory\cite{mossel2003learning,valiant2015finding} and machine learning\cite{luz}, particularly in scenarios where the majority of features are irrelevant for the concept that we want to learn. For instance, in the field of computational biology, some genetic attribute is only determined by a few genes on the long DNA sequence. The related work and our contribution are presented in \Cref{table:related work}.
\begin{table}[!ht]
    \centering
    \Huge
    \renewcommand\arraystretch{2}
    \resizebox{\columnwidth}{!}{
    \begin{tabular}{c|c|c|c}
    \hline
    &  $f:\{0,1\}^n \rightarrow \{0,1\}$ &   \textbf{Unitary $U$ on $n$ qubits} &   \textbf{Channel $\Phi$ on $n$ qubits}   \\ \hline
    \textbf{ Classical Standard Testing} & $O(k\log k + k/\epsilon)$\cite{blais2009testing} &  /  & / \\
     & $\Omega(k\log k)$\cite{sauglam2018near} & / & /\\ \hline
     \textbf{ Quantum Standard Testing} & $\widetilde{O}(\sqrt{k/\epsilon})$\cite{ambainis2016efficient} & $\widetilde{O}(\sqrt{k}/\epsilon)$\cite{chen2023testing} & $\widetilde{O}(k/\epsilon^2)$\cite{bao2023testing} \\
     & $\widetilde{\Omega}(\sqrt{k})$\cite{bun2018polynomial} & $\widetilde{\Omega}(\sqrt{k})$\cite{chen2023testing} & $\widetilde{\Omega}(\sqrt{k})$\cite{bao2023testing} \\ \hline
       \textbf{Classical Tolerant Testing }&  $2^{\widetilde{O}(\sqrt{k\log(1/\epsilon)})}$\cite{nadimpalli2024optimal} & / & /\\
      & $2^{\widetilde{\Omega}(\sqrt{k\log (1/\epsilon)})}$\cite{nadimpalli2024optimal} & / & /\\ \hline
      \textbf{ Quantum Tolerant Testing} & / & $O\left(\frac{k \log k}{\epsilon^2 \rho (1-\rho)^k}\right)$(\Cref{theorem:main result}) & /\\
      & / & / & /\\
    \hline
    \end{tabular}
    }
    \caption{Our result and prior work on testing Boolean and quantum $k$-juntas.}
    \label{table:related work}
\end{table}

\paragraph{Classical Standard Testing of Boolean Juntas.}
Classically, in the standard testing model, the first $k$-juntas tester was raised by Fischer et al.\cite{fischer2004testing} with query complexity $\widetilde{O}(k^2)$. More recently, the best known upper bound for testing $k$-juntas $O(k\log k + k/ \epsilon)$ was provided by Blais\cite{blais2009testing}, which is tight for constant $\epsilon$\cite{sauglam2018near}. There has also been some research on testing $k$-juntas in the \textit{distribution free}\footnote{In the distribution free setting, the distribution on inputs is not assumed to be uniform.} setting. The best known upper bound for this problem is $\widetilde{O}(k/\epsilon)$ due to Bshouty\cite{bshouty:LIPIcs.CCC.2019.2} and Zhang\cite{Zhang2019NearOptimalAF}, which is optimal up to logarithmic factors.

\paragraph{Classical Tolerant Testing of Boolean Juntas.}
Diakonikolas et al.\cite{diakonikolas2007testing} first considered the problem of tolerant junta testing and found that the standard tester from \cite{fischer2004testing} can also provide a $(\mathrm{poly}(\epsilon/k),\epsilon)$ tolerant tester. Chakraborty et al.\cite{chakraborty2012junto} subsequently observed that the analysis in \cite{blais2009testing} actually implies a $(\epsilon/C,\epsilon)$ tolerant tester (for some constant $C > 1$) with $\exp(k/\epsilon)$ queries.
Blais et al.\cite{blais2019tolerant} gave a $(\rho \epsilon/16,\epsilon)$ tolerant tester for some $\rho \in (0,1)$ using $O\left(\frac{k \log k}{\epsilon \rho (1-\rho)^k}\right)$ queries.
Notice that there is a multiplicative "gap" between $\epsilon_1$ and $\epsilon_2$ in the above testers such that $\epsilon_1$ and $\epsilon_2$ can not be arbitrarily close. To address this, De, Mossel, and Neeman\cite{de2019junta} obtained a non-adaptive $(\gamma, \gamma +\epsilon)$ tolerant tester, with $2^k \cdot \mathrm{poly}(k,\frac{1}{\epsilon})$ queries. Subsequently Iyer, Tal and Whitmeyer\cite{iyer_et_al:LIPIcs.CCC.2021.24} introduced an adaptive $2^{\widetilde{O}(\sqrt{k/\epsilon})}$-query $(\gamma, \gamma + \epsilon)$ tolerant tester. Recently Nadimpalli and Patel\cite{nadimpalli2024optimal} gave a non-adaptive $(\gamma, \gamma + \epsilon)$ tolerant tester that makes $2^{\widetilde{O}(\sqrt{k\log(1/\epsilon)})}$ queries, with a matching lower bound. It is worth mentioning that the tolerant tester by Blais et al.\cite{blais2019tolerant} could perform much better in some case, i.e., when $\rho = O(1/k)$ this yields a $(O(\epsilon/k),\epsilon)$ tester using $\mathrm{poly}(k)/\epsilon$ queries.

\paragraph{Quantum Standard Testing of Boolean and Quantum Juntas.}
There has also been some research on testing Boolean juntas via quantum algorithms. Atıcı and Servedio\cite{atici2007quantum} introduced the first quantum algorithm based on Fourier sampling using $O(k/\epsilon)$ queries. Subsequently, Ambainis et al.\cite{ambainis2016efficient} suggested an improved upper bound $\widetilde{O}(\sqrt{k/\epsilon})$, which was proved to be optimal up to a polylogarithmic factor by Bun, Kothari, and Thaler\cite{bun2018polynomial}. In the distribution free setting, Belovs\cite{belovs2019quantum} gave a $O(k/\epsilon)$ quantum tester, matching the best upper bound of classical algorithms\cite{bshouty:LIPIcs.CCC.2019.2,Zhang2019NearOptimalAF}.

It is reasonable to explore the testing of quantum operations in quantum computing besides Boolean functions. Building on a generalization of the tester introduced by Atıcı and Servedio\cite{atici2007quantum}, Wang\cite{wang2011property} developed  a tester for testing quantum $k$-juntas using $\widetilde{O}(k/\epsilon^2)$ queries. In a recent contribution, Chen, Nadimpalli and Yuen\cite{chen2023testing} established a nearly tight upper bound $\widetilde{O}(\sqrt{k}/\epsilon)$ for quantum junta testing. Additionally, Bao and Yao\cite{bao2023testing} presented the first  junta  testing algorithm with $\widetilde{O}(k/\epsilon^2)$ queries for quantum channels.

\subsection{Discussion}
There are several  problems worthy of further consideration:
\begin{enumerate}
    \item Our work has not completely resolved the problem of tolerant quantum junta testing in \Cref{problem:toleran testing}. There exists an inevitable multiplicative gap between $\epsilon_1$ and $\epsilon_2$, i.e., $\epsilon_2 > 8\epsilon_1$ for all choices of $\rho$. We expect a tolerant quantum junta tester capable of handling values of $\epsilon_1$ and $\epsilon_2$ that are arbitrarily close.
    \item Kearns and Ron\cite{kearns1998testing} introduced another generalization of the standard property testing model called "parameterized". The parameterized tester is to determine whether $x \in \mathcal{P}$ or $x$ is $\epsilon$-far from $\mathcal{P}' \supseteq \mathcal{P}$. Note that if $\mathcal{P}'$ is a proper superset of $\mathcal{P}$, the testing problem becomes easier, and normally the parameterized tester will require a smaller query or sample complexity than the standard tester. There have been no quantum algorithms for the parameterized testing problem.
    \item We expect to establish the quantum query lower bound for the problem of tolerant quantum junta testing.
    \item  We are curious to determine whether quantum algorithms offer any advantage for the problem of tolerant Boolean junta testing, compared to the classical lower bound $2^{\widetilde{\Omega}(\sqrt{k\log (1/\epsilon)})}$ obtained by Nadimpalli and Patel\cite{nadimpalli2024optimal}. There seems to be blank on this topic.
\end{enumerate}

\subsection{Organization}

The organization of this paper is as follows. we introduce some essential notations and provide relevant background information in \Cref{section:preliminaries}. Then in \Cref{section:reduction} we demonstrate that the problem of testing quantum juntas can be reduced to testing quantum partition juntas. Lastly, we present the tolerant tester and the proof of \Cref{theorem:main result} in \Cref{section:solution}.

\section{Preliminaries} \label{section:preliminaries}

In this section, we present the essential notations and definitions. For a positive integer $n$, we use $[n]$ to denote the set $\{1,2,\dots,n\}$. For any subset $T\subseteq [n]$, we denote its complement as $\overline{T} \coloneqq [n]\backslash T$.  We use $|S|$ to denote the cardinality of  set $S$.

\subsection{Intersecting Families}
A family $\mathcal{F}$ of subsets of $[n]$ is $t$-\textit{intersecting} if for any two sets $S,T \in \mathcal{F}$, the intersection satisfies the condition $|S \cap T| \ge t$. For $0 < p < 1$, we define the $p$-\textit{biased} measure of the family $\mathcal{F}$ as
$$\mu_p (\mathcal{F}) = \Pr_{S}[S \in \mathcal{F}],$$
where $S$ is a random subset of $[n]$, generated by independently assigning each element $i\in [n]$ to $S$ with probability $p$. Our objective is to figure out the maximum $p$-biased measure of a $t$-intersecting family, given a fixed $p$. Friedgut\cite{friedgut2008measure}, Dinur and Safra\cite{dinur2005hardness} established a upper bound as detailed in \Cref{lemma:biased measure bound}.

\begin{lemma}[\cite{dinur2005hardness,friedgut2008measure}] \label{lemma:biased measure bound}
    Consider a $t$-intersecting family $\mathcal{F}$ of subsets of $[n]$ for some $t \ge 1$. For any $p< \frac{1}{t+1}$, the $p$-biased measure of $\mathcal{F}$ is bounded by $\mu_p(\mathcal{F}) \le p^t$.
\end{lemma}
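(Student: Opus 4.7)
The plan is to prove $\mu_p(\mathcal{F}) \le p^t$ by the compression template standard in extremal set theory, following either the elementary route of Dinur--Safra (iterated shifting plus induction) or the spectral route of Friedgut. The natural extremal family to beat is the \emph{star} $\mathcal{F}^\star = \{S \subseteq [n] : [t] \subseteq S\}$, which is $t$-intersecting and satisfies $\mu_p(\mathcal{F}^\star) = p^t$ exactly; the lemma asserts that no other $t$-intersecting family can exceed this provided $p < 1/(t+1)$.

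First I would reduce to the shifted case. For each $i<j$ in $[n]$, the shift operator $\sigma_{i,j}$ replaces every $S \in \mathcal{F}$ with $j\in S$, $i\notin S$ by $(S\setminus\{j\})\cup\{i\}$, whenever the latter set is not already in $\mathcal{F}$. A routine check shows that $\sigma_{i,j}$ is level-size preserving (so it leaves $\mu_p$ unchanged for every $p$) and preserves the $t$-intersection property. Iterating to a fixed point, I may assume $\mathcal{F}$ is shifted.

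Next, I would induct on $n$ (and simultaneously on $t$) by conditioning on element $n$. Write $\mathcal{F} = \mathcal{F}_- \sqcup \mathcal{F}_+$, where $\mathcal{F}_- = \{S \in \mathcal{F} : n \notin S\}$ and $\mathcal{F}_+ = \{S\setminus\{n\} : S \in \mathcal{F},\, n \in S\}$, both families on $[n-1]$. Then
\[
\mu_p(\mathcal{F}) = (1-p)\,\mu_p(\mathcal{F}_-) + p\,\mu_p(\mathcal{F}_+).
\]
The shifted hypothesis forces $\mathcal{F}_-$ to be $t$-intersecting and $\mathcal{F}_+$ to be $(t-1)$-intersecting on $[n-1]$, and moreover that $\mathcal{F}_-$ and $\mathcal{F}_+$ are \emph{cross-$t$-intersecting}: any $S \in \mathcal{F}_+$ and $T \in \mathcal{F}_-$ satisfy $|S \cap T| \ge t$. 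Feeding these into the induction should close the bound.

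The main obstacle is that the crude inductive bound $(1-p)p^t + p\cdot p^{t-1} = p^t(2-p)$ is too weak, so one must exploit the cross-intersecting relation to obtain a sharper estimate relating $\mu_p(\mathcal{F}_+)$ and $\mu_p(\mathcal{F}_-)$. It is precisely here that the threshold $p < 1/(t+1)$ enters and closes the induction; for $p \ge 1/(t+1)$ the Frankl family $\{S : |S \cap [t+2]| \ge t+1\}$ takes over as extremal and the clean bound $p^t$ fails. A slicker alternative is Friedgut's spectral route: expand $\mathbbm{1}_{\mathcal{F}}$ in the $p$-biased Fourier basis, observe that the $t$-intersection constraint forces correlations of $\mathbbm{1}_{\mathcal{F}}$ with itself under a suitable noise operator to vanish on low levels, and then invoke the $p$-biased Beckner--Bonami hypercontractive inequality to deduce an inequality of the form $\mu_p(\mathcal{F})^2 \le p^t \,\mu_p(\mathcal{F})$, yielding the stated bound directly. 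Either route delivers Lemma~\ref{lemma:biased measure bound}; I would most likely write up the shifting/induction version since it is self-contained and the bookkeeping, while delicate at the threshold, does not require developing $p$-biased Fourier analysis from scratch.
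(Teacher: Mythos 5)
The paper does not prove this lemma; it imports it verbatim from Dinur--Safra and Friedgut, so there is no internal proof to compare your sketch against. Your outline does match the two routes in the cited literature, and the structural facts you invoke are correct: the compression operators $\sigma_{i,j}$ are measure-preserving for every $p$ and preserve the $t$-intersection property; in the shifted case, conditioning on element $n$ yields $\mathcal{F}_-$ (which is $t$-intersecting) and $\mathcal{F}_+$ (which is $(t-1)$-intersecting), and the pair is cross-$t$-intersecting; and the Frankl family $\{S : |S \cap [t+2]| \ge t+1\}$ overtakes the $t$-star exactly past $p = 1/(t+1)$, which is indeed why the threshold is the right one.

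That said, your proposal does not close the argument, and the place it stops is precisely where the theorem's content lies. You correctly observe that the naive inductive bound $(1-p)p^t + p\,p^{t-1} = (2-p)p^t$ is too weak and that the cross-intersecting relation between $\mathcal{F}_-$ and $\mathcal{F}_+$ must be exploited, but you give no mechanism for doing so. The Dinur--Safra route at this point requires a two-family (cross-intersecting) version of the inequality carried through a simultaneous induction on $n$ and $t$, and it is the arithmetic at the $p = 1/(t+1)$ threshold that is delicate; asserting that the cross-intersecting relation ``closes the induction'' does not demonstrate it. Your description of the Friedgut alternative is also slightly off: his proof is a Hoffman-type eigenvalue bound for a carefully chosen Markov operator whose bilinear form vanishes on pairs of sets intersecting in fewer than $t$ elements, not an application of the Beckner--Bonami hypercontractive inequality. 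As a reconstruction of the plan behind an imported lemma your sketch is faithful, but as written it has a genuine gap at the inductive step, which you yourself flag as ``the main obstacle'' without resolving it.
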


\subsection{Probability}
We recall the following Chernoff bound and the union bound used in this paper.
\begin{fact}[The Chernoff bound] \label{fact:chernoff bound}
     Suppose $X_1,\dots,X_N$ are independent random variables taking values in $\{0,1\}$, let $X \coloneqq \sum_{i=1}^N X_i$, $\mu = \mathbf{E}[X] $, for any $\delta > 0$, we have
     $$\Pr \left[X > (1+\delta)\mu \right] \le e^{-\frac{\delta^2 \mu}{2+\delta}},\ \Pr \left[X < (1-\delta)\mu \right] \le e^{-\frac{\delta^2 \mu}{2}},$$
     and
     $$\Pr \left[|X -\mu| > \delta \mu \right] \le 2e^{-\frac{\delta^2 \mu}{2+\delta}}.$$
\end{fact}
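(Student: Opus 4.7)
The plan is to use the standard moment-generating-function argument, then reduce the resulting Bernstein-type expressions to the specific exponents in the statement via two real-analytic inequalities. I would prove the upper tail first, the lower tail by the same recipe applied to $-X$, and then get the two-sided inequality by a union bound (noting $2+\delta \ge 2$, so the upper-tail exponent is the larger of the two).

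For the upper tail I would start from Markov's inequality applied to $e^{tX}$ for a parameter $t>0$ to be optimized:
$$\Pr[X > (1+\delta)\mu] \;=\; \Pr\!\left[e^{tX} > e^{t(1+\delta)\mu}\right] \;\le\; \frac{\mathbf{E}[e^{tX}]}{e^{t(1+\delta)\mu}}.$$
Because the $X_i$ are independent, $\mathbf{E}[e^{tX}] = \prod_{i=1}^N \mathbf{E}[e^{tX_i}]$. Writing $p_i = \Pr[X_i=1]$, each factor equals $1 + p_i(e^t-1)$, which is at most $\exp(p_i(e^t-1))$ by the inequality $1+x \le e^x$. Since $\mu = \sum_i p_i$, this yields $\mathbf{E}[e^{tX}] \le \exp(\mu(e^t - 1))$. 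Choosing $t = \ln(1+\delta) > 0$ gives the classical form
$$\Pr[X > (1+\delta)\mu] \;\le\; \left(\frac{e^\delta}{(1+\delta)^{1+\delta}}\right)^{\!\mu}.$$

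The main technical step is then the numerical inequality
$$\delta - (1+\delta)\ln(1+\delta) \;\le\; -\frac{\delta^2}{2+\delta} \qquad \text{for all } \delta > 0,$$
which upon exponentiating and raising to the $\mu$-th power yields the desired $e^{-\delta^2 \mu/(2+\delta)}$. I would establish this by letting $g(\delta) = (1+\delta)\ln(1+\delta) - \delta - \delta^2/(2+\delta)$, checking $g(0)=0$, and showing $g'(\delta) \ge 0$ for $\delta \ge 0$ (this reduces to an elementary algebraic inequality after differentiating). For the lower tail I would apply the same recipe to $-X$: Markov on $e^{-tX}$ with $t>0$, independence, and the bound $1-p_i + p_i e^{-t} \le \exp(p_i(e^{-t}-1))$, then choose $t = -\ln(1-\delta)$ (valid for $\delta \in (0,1)$; for $\delta \ge 1$ the event is impossible since $X\ge 0$). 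This gives
$$\Pr[X < (1-\delta)\mu] \;\le\; \left(\frac{e^{-\delta}}{(1-\delta)^{1-\delta}}\right)^{\!\mu},$$
which I would reduce to $e^{-\delta^2 \mu/2}$ via the companion inequality $-(1-\delta)\ln(1-\delta) - \delta \le -\delta^2/2$, again proved by a short calculus argument (expand $\ln(1-\delta) = -\sum_{k\ge 1}\delta^k/k$ and group terms).

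The two-sided statement follows by union bound: since $\delta > 0$ implies $-\delta^2/(2+\delta) \ge -\delta^2/2$ fails in general, but here both one-sided bounds are at most $e^{-\delta^2\mu/(2+\delta)}$ because $2+\delta \ge 2$, so their sum is at most $2 e^{-\delta^2\mu/(2+\delta)}$. The main obstacle is really just verifying the two elementary inequalities cleanly; the probabilistic content is fully captured by the Markov/independence/MGF chain and is routine.
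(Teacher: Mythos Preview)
The paper does not prove this statement; it is recorded as a standard fact and simply recalled from the literature. Your proposal supplies the textbook moment-generating-function argument, and it is correct in all essentials: the Markov/independence/MGF chain, the choice $t=\ln(1+\delta)$ (resp.\ $t=-\ln(1-\delta)$), and the two calculus inequalities all go through as you describe. One small wording slip: in the two-sided step you write that ``$-\delta^2/(2+\delta) \ge -\delta^2/2$ fails in general,'' but in fact this inequality \emph{holds} for every $\delta>0$ (since $2+\delta\ge 2$), and that is exactly what makes $e^{-\delta^2\mu/2}\le e^{-\delta^2\mu/(2+\delta)}$, so the union bound yields $2e^{-\delta^2\mu/(2+\delta)}$ as claimed.
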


\begin{fact} [The union bound] \label{fact:union bound}
    For a collection of events $A_1,\dots,A_N$, it holds that
    $$\Pr\left[ \bigcup_{i=1}^N A_i \right] \le \sum_{i=1}^{N}\Pr\left[A_i \right].$$
\end{fact}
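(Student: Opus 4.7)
The plan is to give a short inductive proof built on the two-event special case, which itself follows directly from the finite additivity and monotonicity of a probability measure. The base case $N=1$ is immediate since $\Pr[A_1] \le \Pr[A_1]$ with equality, so the entire content of the statement lives in the inductive step.

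First I would establish the two-event bound $\Pr[A \cup B] \le \Pr[A] + \Pr[B]$ as a preliminary lemma. The idea is to decompose $A \cup B$ as the disjoint union $A \sqcup (B \setminus A)$ and apply finite additivity to get $\Pr[A \cup B] = \Pr[A] + \Pr[B \setminus A]$. Since $B \setminus A \subseteq B$, monotonicity of the probability measure yields $\Pr[B \setminus A] \le \Pr[B]$, and adding $\Pr[A]$ to both sides gives the claimed inequality. Equivalently, one may invoke inclusion–exclusion $\Pr[A \cup B] = \Pr[A] + \Pr[B] - \Pr[A \cap B]$ and drop the non-negative term $\Pr[A \cap B]$.

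With the two-event case in hand, I would complete the induction by writing
$$\bigcup_{i=1}^{N} A_i = \left(\bigcup_{i=1}^{N-1} A_i\right) \cup A_N,$$
applying the two-event bound with $A \coloneqq \bigcup_{i=1}^{N-1} A_i$ and $B \coloneqq A_N$, and then substituting the inductive hypothesis $\Pr\left[\bigcup_{i=1}^{N-1} A_i\right] \le \sum_{i=1}^{N-1} \Pr[A_i]$ to obtain $\sum_{i=1}^{N} \Pr[A_i]$ on the right-hand side.

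An alternative non-inductive route is the disjointification trick: set $B_1 \coloneqq A_1$ and $B_i \coloneqq A_i \setminus \bigcup_{j<i} A_j$ for $i \ge 2$. Then the $B_i$ are pairwise disjoint with $\bigcup_i B_i = \bigcup_i A_i$ and $B_i \subseteq A_i$, so finite additivity and monotonicity give $\Pr\left[\bigcup_i A_i\right] = \sum_i \Pr[B_i] \le \sum_i \Pr[A_i]$ in a single line. There is no real obstacle here: the only subtlety, if any, is that the argument uses finite additivity and monotonicity as axiomatic properties of $\Pr$, which require no further justification in the finite setting considered by the fact.
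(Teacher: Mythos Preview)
Your proof is correct and complete; both the inductive argument and the disjointification alternative are standard and valid. The paper itself does not prove this statement at all—it is stated as a background \textbf{Fact} without justification—so your write-up actually supplies more detail than the original.
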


\subsection{Unitary Operators}
 Throughout this paper, we use $\mathcal{U}_n$ to denote the set of all unitary operators on $n$ qubits.

\begin{definition}[Quantum $k$-junta] \label{definition:quantum k-junta}
    Given $T\subseteq [n]$, we say that a unitary $U\in \mathcal{U}_n$ is a quantum junta on $T$, if it can be expressed in the form
    $$U = V_T\otimes I_{\overline{T}},$$    
    where $V_T $ acts on $T$. Furthermore, $U$ is called a quantum $k$-junta, if  $|T|=k$. We denote the class of quantum $k$-juntas as $\mathcal{V}_k$.
\end{definition}

For any two operators $A,B$ in a Hilbert space, the Hilbert-Schmidt inner product is defined as $\left \langle A,B \right \rangle = \mathrm{Tr}(A^\dagger B)$. We introduce a metric for the purpose of evaluating the distance between two operators.

\begin{definition} \label{definition:distance metric}
    Given two operators $A,B$ acting on an $N$-dimensional Hilbert space, we define
    $$\mathrm{dist}(A,B) \coloneqq \min_{\theta \in [0,2\pi)} \frac{1}{\sqrt{2N}} \|e^{i\theta}A-B\|,$$
    where $\|A\| =\sqrt{\mathrm{Tr}(A^\dagger A)}$ is the Frobenius norm. 
\end{definition}

The normalization factor $1/\sqrt{2N}$ is to ensure  $0\le \mathrm{dist}(A,B) \le 1$ for unitary operators. Observed that $\mathrm{dist}(A,B) = 0$ if and only if $A = e^{i\theta}B$ for some $\theta \in [0,2\pi)$. In addition, for unitary operators, there is a relation between $\mathrm{dist}(U,V)$ and $\langle U, V\rangle:$
\begin{equation*}\label{equation:relation between d and product}
    \mathrm{dist}(U,V) = \sqrt{1-\frac{1}{N}|\langle U, V\rangle}|.
\end{equation*}

\subsection{Influence of Qubits on Unitaries}
Recall the Pauli operators
$$\sigma_0 = I = \begin{pmatrix}
    1 & 0 \\
    0 & 1
\end{pmatrix},\ \sigma_1 = X = \begin{pmatrix}
    0 & 1 \\
    1 & 0 
\end{pmatrix},\ \sigma_2 = Y = \begin{pmatrix}
    0 & -i \\
    i & 0
\end{pmatrix},\ \sigma_3 = Z = \begin{pmatrix}
    1 & 0 \\
    0 & -1
\end{pmatrix}.$$
For any vector $x = (x_1,x_2,\dots,x_n) \in \{0,1,2,3\}^n \cong \mathbb{Z}_4^n$, we define $\sigma_x \coloneqq \otimes_{i=1}^n \sigma_{x_i}$. Then the set $\left \{\frac{1}{\sqrt{2^n}} \sigma_x \right\}_{x\in \mathbb{Z}_4^n}$ constitutes an orthonormal basis for the Hilbert space consisting of all operators on $n$ qubits. Consequently, any operator $A$ can be expressed in the form 
$$A = \sum_{x\in \mathbb{Z}_4^n} \widehat{A}(x) \sigma_x,$$
where the term $\widehat{A}(x) = \frac{1}{2^n} \left \langle A,\sigma_x \right \rangle$ represents the \textit{Pauli\ coefficient}, and the set $\{\widehat{A}(x)\}_x$ is referred to as the \textit{Pauli\ spectrum} of $A$. It is straightforward to confirm that
$$\frac{1}{2^n} \|A\|^2 = \sum_{x\in \mathbb{Z}_4^n}|\widehat{A}(x)|^2.$$
Additionally,  for any unitary operator $U \in \mathcal{U}_n$, it holds that $\sum_{x\in \mathbb{Z}_4^n}|\widehat{U}(x)|^2=1$.

We can now move forward to introduce the definition of influence of qubits on unitaries, first introduced by Montanaro and Osborne \cite{montanaro2010quantum}  in the context of Hermitian unitary matrices.
\begin{definition} [Influence of qubits] \label{definition:influence of qubits}
    For a unitary operator $U \in \mathcal{U}_n$ and a set $S \subseteq [n]$, the influence of $S$ on $U$ is defined as
    $$\mathbf{Inf}_U[S] =  \sum_{x:\mathrm{supp}(x)\cap S \ne \emptyset} |\widehat{U}(x)|^2,$$
    where $\mathrm{supp}(x)\coloneqq \{i\in [n]:x_i \ne 0\}$.
\end{definition}

The two evident properties of influence in \Cref{lemma:properties of influence} will be utilized consistently throughout our discussion.

\begin{lemma} \label{lemma:properties of influence}
Consider a unitary operator $U\in \mathcal{U}_n$ and two sets $S,T\subseteq [n]$. The following properties hold:
\begin{itemize}
    \item Monotonicity: If $S \subseteq T$, then $\mathbf{Inf}_U[S] \le \mathbf{Inf}_U[T]$;
    \item Subadditivity: $\mathbf{Inf}_U[S\cup T] \le \mathbf{Inf}_U[S] + \mathbf{Inf}_U[T]$.
\end{itemize}
\end{lemma}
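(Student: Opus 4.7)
The plan is to derive both properties directly from the definition of influence as a sum of squared Pauli coefficients over a particular index set, exploiting the fact that every summand $|\widehat{U}(x)|^2$ is non-negative. Both claims reduce to elementary set-containment arguments about the indexing sets $\{x \in \mathbb{Z}_4^n : \mathrm{supp}(x) \cap R \neq \emptyset\}$ as $R$ ranges over $S$, $T$, and $S \cup T$; nothing about the structure of $U$ as a unitary is needed beyond the definition.

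For monotonicity, I would first observe that whenever $S \subseteq T$ and $x \in \mathbb{Z}_4^n$ satisfies $\mathrm{supp}(x) \cap S \neq \emptyset$, any index $i \in \mathrm{supp}(x) \cap S$ also lies in $T$, so $\mathrm{supp}(x) \cap T \neq \emptyset$ as well. Consequently
$$\{x : \mathrm{supp}(x) \cap S \neq \emptyset\} \subseteq \{x : \mathrm{supp}(x) \cap T \neq \emptyset\},$$
and since the terms $|\widehat{U}(x)|^2$ are non-negative, enlarging the index set can only increase the sum, giving $\mathbf{Inf}_U[S] \le \mathbf{Inf}_U[T]$.

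For subadditivity, the key observation is the set identity
$$\{x : \mathrm{supp}(x) \cap (S \cup T) \neq \emptyset\} = \{x : \mathrm{supp}(x) \cap S \neq \emptyset\} \cup \{x : \mathrm{supp}(x) \cap T \neq \emptyset\},$$
which follows from the distributive law $\mathrm{supp}(x) \cap (S \cup T) = (\mathrm{supp}(x) \cap S) \cup (\mathrm{supp}(x) \cap T)$. Summing $|\widehat{U}(x)|^2$ over a union is bounded above by the sum of the sums over the two sets (each $x$ in the intersection is counted twice on the right, which only helps because of non-negativity), yielding $\mathbf{Inf}_U[S \cup T] \le \mathbf{Inf}_U[S] + \mathbf{Inf}_U[T]$.

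No part of this argument is delicate: there is no genuine obstacle, and I do not expect to need any property of $U$ besides the expansion into the Pauli basis. The only point requiring mild care is writing down the set identity for subadditivity cleanly, so that the reader sees why every $x$ contributing on the left appears in at least one of the two sums on the right.
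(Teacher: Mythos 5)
Your proof is correct, and it is essentially the only natural argument: both properties follow immediately from the definition of $\mathbf{Inf}_U[S]$ as a sum of the non-negative quantities $|\widehat{U}(x)|^2$ over the index set $\{x : \mathrm{supp}(x)\cap S \neq \emptyset\}$, together with the elementary set inclusions you identify. The paper itself calls these properties ``evident'' and omits a proof, so there is nothing to compare against; your write-up supplies exactly the short verification that the authors left implicit.
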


Here we present two crucial lemmas that describe the relations between distance and influence.

\begin{lemma} \label{lemma:close to k-juntas}
    Given a unitary $U\in \mathcal{U}_n$, if $U$ is $\epsilon$-close to a quantum junta on $T$, then
    $$\mathbf{Inf}_U [\overline{T}] \le 2\epsilon^2.$$
\end{lemma}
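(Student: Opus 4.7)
The plan is to unfold the definition of $\mathrm{dist}$, absorb the minimizing phase into $U$, and then expand the Frobenius distance in the Pauli basis where a quantum junta has a clean support structure.

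First I would pick $V = V_T \otimes I_{\overline{T}}$ witnessing $\mathrm{dist}(U,V) \le \epsilon$. By the definition of $\mathrm{dist}$, there exists $\theta \in [0,2\pi)$ with $\|e^{i\theta}U - V\|^2 \le 2N \epsilon^2$, where $N = 2^n$. Replacing $U$ by $e^{i\theta}U$ does not change any $|\widehat{U}(x)|^2$ (each Pauli coefficient is merely rephased), so $\mathbf{Inf}_U[\overline{T}]$ is invariant; I may therefore assume $\|U-V\|^2 \le 2N\epsilon^2$ outright.

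Next I would examine the Pauli spectrum of $V$. Because $\sigma_0 = I$, the tensor-product structure $V = V_T \otimes I_{\overline{T}}$ forces $\widehat{V}(x) = 0$ whenever $x$ has any nonzero coordinate in $\overline{T}$, i.e.\ whenever $\mathrm{supp}(x) \cap \overline{T} \neq \emptyset$. Using $\tfrac{1}{2^n}\|A\|^2 = \sum_x |\widehat{A}(x)|^2$, I would write
\begin{equation*}
\|U-V\|^2 \;=\; 2^n \sum_{x \in \mathbb{Z}_4^n} |\widehat{U}(x)-\widehat{V}(x)|^2 \;\ge\; 2^n \!\!\!\sum_{x:\,\mathrm{supp}(x)\cap \overline{T}\neq\emptyset}\!\!\! |\widehat{U}(x)|^2 \;=\; 2^n\,\mathbf{Inf}_U[\overline{T}],
\end{equation*}
where the inequality uses that on exactly those $x$ with $\mathrm{supp}(x)\cap\overline{T}\neq\emptyset$ we have $\widehat{V}(x)=0$, so $|\widehat{U}(x)-\widehat{V}(x)|^2 = |\widehat{U}(x)|^2$.

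Combining the two bounds gives $2^n\,\mathbf{Inf}_U[\overline{T}] \le 2N\epsilon^2 = 2\cdot 2^n \epsilon^2$, hence $\mathbf{Inf}_U[\overline{T}] \le 2\epsilon^2$, as required. There is no real obstacle here; the only subtlety is remembering to use the phase freedom in $\mathrm{dist}$ before Parseval-expanding, and to read off the Pauli support of a junta from the fact that $I = \sigma_0$ is the unique Pauli that contributes nothing to the support.
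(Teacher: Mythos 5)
Your proof is correct, and it takes a genuinely different route from the paper's. The paper first passes to the closed-form expression $\mathrm{dist}(U,V) = \sqrt{1 - \tfrac{1}{N}|\langle U,V\rangle|}$, then bounds the inner product $\bigl|\sum_{x\in\mathbb{Z}_4^T}\overline{\widehat{U}(x)}\,\widehat{V}(x)\bigr|$ below by $1-\epsilon^2$ and applies Cauchy--Schwarz (together with $\sum_x|\widehat{V}(x)|^2=1$) to deduce $\sum_{x\in\mathbb{Z}_4^T}|\widehat{U}(x)|^2 \ge (1-\epsilon^2)^2 > 1-2\epsilon^2$; the factor $2$ there comes from expanding the square. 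You instead stay with the Frobenius norm itself: you absorb the optimal phase into $U$ (noting that this leaves every $|\widehat{U}(x)|^2$, hence $\mathbf{Inf}_U[\overline{T}]$, unchanged), Parseval-expand $\|U-V\|^2$, and simply drop the nonnegative terms indexed by $x$ with $\mathrm{supp}(x)\subseteq T$; your factor $2$ comes directly from the $\sqrt{2N}$ normalization in the definition of $\mathrm{dist}$. Your version avoids Cauchy--Schwarz and the inner-product identity entirely, and in fact never uses unitarity of $U$ or $V$ beyond what is baked into the $\mathrm{dist}$ normalization, so it is both a bit shorter and slightly more general. Both arguments are sound and yield the same bound.
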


\begin{proof}
    Let unitary $V = V_T \otimes I_{\overline{T}}$ such that $\mathrm{dist}(U,V) \le \epsilon$.  Define the set $\mathbb{Z}_4^T = \{x\in \mathbb{Z}_4^n : \mathrm{supp}(x) \subseteq T\}$. We have $\widehat{V}(x) = 0$ if $x \notin \mathbb{Z}_4^T$. Therefore,
    \begin{equation}
        \begin{aligned}
            \mathrm{dist}(U,V) = \sqrt{1 - \frac{1}{2^n}|\braket{U, V}| } = \sqrt{1 - \left|\sum_{x \in \mathbb{Z}_4^T} \widehat{U}^\dagger(x) \widehat{V}(x)\right| } \leq \epsilon.
        \end{aligned}
    \end{equation}
    According to Cauchy-Schwarz inequality, we have 
    \begin{equation}
        \begin{aligned}
            \sum_{x \in \mathbb{Z}_4^T}  \left| \widehat{U}(x) \right|^2 \geq \left|\sum_{x \in \mathbb{Z}_4^T} \widehat{U}^\dagger(x) \widehat{V}(x)\right|^2 \geq (1 - \epsilon^2)^2 > 1 - 2\epsilon^2.
        \end{aligned}
    \end{equation}
    Therefore, 
    \begin{equation}
        \begin{aligned}
            \mathbf{Inf}_U[\overline{T}] = 1- \sum_{x\in \mathbb{Z}_4^T} |\widehat{U}(x)|^2 < 2\epsilon^2.
        \end{aligned}
    \end{equation}
\end{proof}

The following result was implied in \cite{wang2011property}, although the notion of influence was not given there.
\begin{lemma}[\cite{wang2011property}] \label{lemma:far from k-juntas}
    Given a unitary $U \in \mathcal{U}_n$, if $U$ is $\epsilon$-far from every quantum $k$-junta, it follows that for all $T\subseteq [n]$ with size $|T|\le k$, we have  
    $$\mathbf{Inf}_U[\overline{T}] > \frac{\epsilon^2}{2}.$$
\end{lemma}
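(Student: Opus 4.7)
The plan is to argue by contrapositive: assuming $\mathbf{Inf}_U[\overline{T}] \le \epsilon^2/2$ for some $T$ with $|T| \le k$, I will exhibit a quantum $k$-junta $V = V_T \otimes I_{\overline{T}}$ with $\mathrm{dist}(U,V) \le \epsilon$, contradicting the hypothesis that $U$ is $\epsilon$-far from every quantum $k$-junta.

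The first step is to form the natural ``Pauli projection'' onto operators acting trivially outside $T$, namely $A := \sum_{x \in \mathbb{Z}_4^T} \widehat{U}(x)\,\sigma_x$, where $\mathbb{Z}_4^T = \{x : \mathrm{supp}(x) \subseteq T\}$. Equivalently $A = A_T \otimes I_{\overline{T}}$ with $A_T = \frac{1}{2^{n-|T|}}\mathrm{Tr}_{\overline{T}}[U]$. By Parseval, $\frac{1}{2^n}\|U - A\|^2 = \sum_{x \notin \mathbb{Z}_4^T}|\widehat{U}(x)|^2 = \mathbf{Inf}_U[\overline{T}]$, so $U$ is already Frobenius-close to $A$ on the order of $\sqrt{\mathbf{Inf}_U[\overline{T}]}$. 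The difficulty is that $A$ is not a unitary, so I need to round $A_T$ to a nearby unitary $V_T$ while controlling the extra error.

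For the rounding I will use the polar decomposition: if $A_T = W\Sigma Z^\dagger$ is an SVD with singular values $\sigma_1,\dots,\sigma_d$ on $d = 2^{|T|}$, take $V_T := W Z^\dagger$, the unique Frobenius-nearest unitary. The key observation for bounding the rounding error is that every $\sigma_i \le 1$: since $A_T = \mathbb{E}_i\,(I_T \otimes \langle i|_{\overline{T}})\,U\,(I_T \otimes |i\rangle_{\overline{T}})$ is an average of operators of operator norm $\le 1$ (as $U$ is unitary), we have $\|A_T\|_{\mathrm{op}} \le 1$. Combined with the Parseval identity $\|A_T\|_F^2 = 2^{|T|}(1 - \mathbf{Inf}_U[\overline{T}])$, the bound $(1-\sigma_i)^2 \le 1 - \sigma_i^2$ (valid on $[0,1]$) gives $\|A_T - V_T\|_F^2 \le 2^{|T|}\mathbf{Inf}_U[\overline{T}]$, hence $\|A - V\|_F^2 \le 2^n\,\mathbf{Inf}_U[\overline{T}]$ after tensoring with $I_{\overline{T}}$.

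Finally, a triangle inequality yields $\|U - V\|_F \le 2\sqrt{2^n\,\mathbf{Inf}_U[\overline{T}]}$, and then the normalization in Definition~\ref{definition:distance metric} (which takes $\theta = 0$ as a valid choice) gives $\mathrm{dist}(U,V) \le \sqrt{2\,\mathbf{Inf}_U[\overline{T}]} \le \epsilon$. Since $|T| \le k$, $V$ is a quantum $k$-junta, contradicting the far-from-junta assumption, and the lemma follows. The main obstacle, and really the only non-cosmetic step, is the contraction property $\|A_T\|_{\mathrm{op}} \le 1$ together with the $[0,1]$-range of singular values, which is what allows the clean factor of $2$ between $\|U - A\|_F^2$ and the rounded approximation $\|U - V\|_F^2$; everything else is Pauli bookkeeping and triangle inequalities.
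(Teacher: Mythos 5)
Your proof is correct. The paper does not actually prove this lemma --- it cites \cite{wang2011property} and only remarks that the result is implicit there --- so there is no in-paper argument to compare against, but your self-contained proof is sound and is essentially the standard route. I verified the key computations: $\frac{1}{2^n}\|U-A\|_F^2 = \mathbf{Inf}_U[\overline T]$ by Parseval; $\|A_T\|_{\mathrm{op}}\le 1$ because $A_T = \mathbb{E}_i\,(I_T\otimes\langle i|)\,U\,(I_T\otimes|i\rangle)$ is a convex combination of contractions; $\|A_T\|_F^2 = 2^{|T|}(1-\mathbf{Inf}_U[\overline T])$; the polar-rounding bound $\sum_i(1-\sigma_i)^2 \le \sum_i(1-\sigma_i^2) = 2^{|T|}\mathbf{Inf}_U[\overline T]$ since all $\sigma_i\in[0,1]$; and the final chain $\mathrm{dist}(U,V) \le \frac{1}{\sqrt{2\cdot 2^n}}\bigl(\|U-A\|_F+\|A-V\|_F\bigr) \le \sqrt{2\,\mathbf{Inf}_U[\overline T]} \le \epsilon$. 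The contraction property of $A_T$ is indeed the step that cannot be skipped --- without $\sigma_i\le 1$ the inequality $(1-\sigma_i)^2\le 1-\sigma_i^2$ fails and the rounding error is not controlled --- and you identified it correctly. One cosmetic point: the paper's \Cref{definition:quantum k-junta} defines a quantum $k$-junta via a set $T$ with $|T|=k$ exactly; when $|T|<k$ you should note that $V_T\otimes I_{\overline T}$ can be reparsed as a junta on any superset of $T$ of size $k$, so the contradiction with being $\epsilon$-far from $\mathcal{V}_k$ still holds.
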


\subsection{The Choi-Jamiolkowski Isomorphism} 
The Choi-Jamiolkowski Isomorphism\cite{jamiolkowski1972linear,choi1975completely} shows that there exists a duality between quantum channels and quantum states. In particular, we can associate each $n$-qubit unitary $U$  with a specific Choi-Jamiolkowski state (abbreviated as CJ state):
$$|v(U)\rangle \coloneqq (U\otimes I) \left(\frac{1}{\sqrt{2^n}} \sum_{0\le i < 2^n} |i\rangle |i\rangle \right)= \frac{1}{\sqrt{2^n}} \sum_{0\le i,j < 2^n} U[i,j] |i\rangle |j\rangle.$$

The CJ state $|v(U)\rangle$ can be constructed by generating $n$ EPR pairs $(l,\tilde{l}) \in [n] \times \{n+1,\dots,2n\}$, followed by applying the unitary $U$ on the first half of each EPR pairs. We regard qubits labeled $\{1,\dots,n\}$ as the ones on which $U$ operates, while the qubits labeled $\{n+1,\dots,2n\}$ are influenced by the identity operator $I$.

\section{Reducing to Quantum \texorpdfstring{$k$}{}-part Junta Testing} \label{section:reduction}
In this section, we state how to reduce the problem of testing quantum $k$-juntas to the problem of testing quantum $k$-part juntas. A significant benefit of this approach is that the parameter $n$ is no longer involved. We begin by the definition of quantum $k$-part juntas, analogous to the definition of Boolean $k$-part juntas in \cite{blais2019tolerant}, and establish two relations between quantum $k$-juntas and quantum $k$-part juntas. On the basis of these two relations, we successfully demonstrate the reduction in \Cref{proposition:reduction to part juntas}.

\begin{definition}[Partition quantum juntas] \label{definition:partition quantum juntas}
    Let $\mathcal{I}=\{I_1,\dots,I_l\}$ represent a random partition of $[n]$ into $l \ge k$ parts, where $k \ge 1$. A unitary $U \in \mathcal{U}_n$ is a quantum $k$-part junta with respect to $\mathcal{I}$ if $U$ is a quantum junta on  a union of at most $k$ parts of $\mathcal{I}$. Furthermore, 
    \begin{itemize}
        \item We say that $U$ is $\epsilon$-close to some quantum $k$-part junta with respect to $\mathcal{I}$ if there exists a set $S \subseteq [l]$ of size $l-k$ such that $\mathbf{Inf}_U[\phi_\mathcal{I}(S)] \le \epsilon$;
        \item Conversely, we say that $U$ is $\epsilon$-far from every quantum $k$-part junta with respect to $\mathcal{I}$ if for every set $S\subseteq [l]$ of size $l-k$, it holds that $\mathbf{Inf}_U[\phi_\mathcal{I}(S)] > \epsilon$,
    \end{itemize}
    where $\phi_\mathcal{I}(S) = \bigcup_{i\in S} I_i$. 
\end{definition}

In the following lemma, we prove that the unitaries that are close to some quantum $k$-junta, are also close to some quantum $k$-part junta with respect to any $l$-partition with $l \ge k$.

\begin{lemma} \label{lemma:approximate be k-part}
    Let $\mathcal{I}$ be any partition of $[n]$ with $l \ge k$ parts. If a unitary $U$ is $\epsilon$-close to some quantum $k$-junta, it follows that $U$ is $2\epsilon^2$-close to some quantum $k$-part junta with respect to $\mathcal{I}$.    
\end{lemma}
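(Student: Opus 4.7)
The plan is to reduce this immediately to an application of \Cref{lemma:close to k-juntas} combined with the monotonicity of influence. First I would unpack the hypothesis: since $U$ is $\epsilon$-close to some quantum $k$-junta, there exists a set $T \subseteq [n]$ with $|T| = k$ such that $U$ is $\epsilon$-close to a junta on $T$. Applying \Cref{lemma:close to k-juntas} yields the key estimate
\[
\mathbf{Inf}_U[\overline{T}] \le 2\epsilon^2.
\]

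Next I would convert $T$ into a subset of $[l]$ indexing parts of $\mathcal{I}$. Let $S'' = \{\, i \in [l] : I_i \cap T \neq \emptyset \,\}$ be the set of part-indices touched by $T$. Since $|T| = k$, we have $|S''| \le k$; if $|S''| < k$, pad $S''$ arbitrarily with indices from $[l]$ so that $|S''| = k$ exactly (this uses $l \ge k$). Define $S = [l] \setminus S''$, so $|S| = l - k$, which is the right size required by \Cref{definition:partition quantum juntas}.

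Then for every $i \in S$ the part $I_i$ is disjoint from $T$ by construction of $S''$, and therefore
\[
\phi_{\mathcal{I}}(S) \;=\; \bigcup_{i \in S} I_i \;\subseteq\; [n] \setminus T \;=\; \overline{T}.
\]
Invoking the monotonicity property from \Cref{lemma:properties of influence} gives
\[
\mathbf{Inf}_U[\phi_{\mathcal{I}}(S)] \;\le\; \mathbf{Inf}_U[\overline{T}] \;\le\; 2\epsilon^2,
\]
which, by \Cref{definition:partition quantum juntas}, is exactly the statement that $U$ is $2\epsilon^2$-close to some quantum $k$-part junta with respect to $\mathcal{I}$.

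I do not anticipate any real obstacle here; the only subtle point is the bookkeeping needed to guarantee $|S| = l - k$ (rather than merely $|S| \ge l - k$), which is handled by padding $S''$ up to size $k$. Everything else is a direct composition of the two previously established tools: \Cref{lemma:close to k-juntas} to convert closeness on $T$ into a small influence of $\overline{T}$, and monotonicity of $\mathbf{Inf}_U$ to pass from $\overline{T}$ down to the smaller qubit set $\phi_{\mathcal{I}}(S)$.
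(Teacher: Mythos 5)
Your proposal is correct and follows essentially the same route as the paper's proof: both apply \Cref{lemma:close to k-juntas} to obtain $\mathbf{Inf}_U[\overline{T}] \le 2\epsilon^2$, identify the at most $k$ parts of $\mathcal{I}$ that intersect $T$ (padding up to exactly $k$ if needed), take the complementary index set $S$ of size $l-k$, observe $\phi_{\mathcal{I}}(S) \subseteq \overline{T}$, and conclude by monotonicity of influence.
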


\begin{proof}
    Let $V \in \mathcal{V}_k$ such that $\mathrm{dist}(U,V) = \mathrm{dist}(U,\mathcal{V}_k) \le \epsilon$. Let $I_{i_1},\dots, I_{i_r}$ be the $r \le k$ parts of $\mathcal{I}$ containing all the relevant qubits $T$ on $V$. For any set $S\subseteq [l]$ with $|S|=k$ such that $\{i_1,\dots,i_r\} \subseteq S$, we have $T \subseteq \phi_\mathcal{I}(S)$. Then $\overline{S} \subseteq [l]$ with $|\overline{S}| = l-k$ and $\phi_\mathcal{I}(\overline{S}) \subseteq \overline{T}$. By \Cref{lemma:close to k-juntas} and the monotonicity of influence, we can conclude that
    $$\mathbf{Inf}_U[\phi_\mathcal{I}(\overline{S})] \le \mathbf{Inf}_U[\overline{T}] \le 2\epsilon^2.$$
\end{proof}

In \Cref{lemma:violate be k-part} we describe a way to create a random partition, and prove that with probability at least $5/6$, the unitaries that are far from every quantum $k$-junta are also far from every quantum $k$-part junta with respect to the partition.

\begin{lemma} \label{lemma:violate be k-part}
    Let $\mathcal{I}$ represent a random partition of the set $[n]$, consisting of $l \overset{\mathrm{def}}{=} 24k^2$ distinct parts, which is generated by uniformly and independently including each element $i\in [n]$ in one of these parts. If we consider a unitary $U$ that is $\epsilon$-far from every quantum $k$-junta, then with probability at least $5/6$ over the choice of the partition $\mathcal{I}$, it follows that $U$ is $\frac{\epsilon^2}{8}$-far from every quantum $k$-part junta with respect to $\mathcal{I}$.
\end{lemma}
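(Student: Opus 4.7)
Plan: I'll prove the lemma by contradiction with Wang's \Cref{lemma:far from k-juntas}. The starting point is the subadditivity of influence (\Cref{lemma:properties of influence}): for any partition $\mathcal{I}$, any $T\subseteq [l]$ with $|T|=k$, and any $R\subseteq \phi_\mathcal{I}(T)$ with $|R|=k$, the decomposition $\overline R = \phi_\mathcal{I}(\overline T) \cup (\phi_\mathcal{I}(T)\setminus R)$ combined with \Cref{lemma:far from k-juntas} gives
\[
\frac{\epsilon^2}{2} \;<\; \mathbf{Inf}_U[\overline R] \;\le\; \mathbf{Inf}_U[\phi_\mathcal{I}(\overline T)] + \mathbf{Inf}_U[\phi_\mathcal{I}(T)\setminus R].
\]
Thus, if for every $T$ we can exhibit some $R \subseteq \phi_\mathcal{I}(T)$ of size $k$ with $\mathbf{Inf}_U[\phi_\mathcal{I}(T)\setminus R] \le 3\epsilon^2/8$, we immediately obtain $\mathbf{Inf}_U[\phi_\mathcal{I}(\overline T)] > \epsilon^2/8$ for that $T$. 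So the lemma reduces to proving that such $R$'s exist simultaneously for all $T$ with probability at least $5/6$ over $\mathcal{I}$.

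The natural candidate for $R$ is the $k$ qubits of $\phi_\mathcal{I}(T)$ with the largest individual influences $\mathbf{Inf}_U[\{i\}]$. Over the randomness of $\mathcal{I}$, each qubit of $[n]$ lands in a given part with probability $1/l = 1/(24k^2)$ independently, so $\phi_\mathcal{I}(T)$ is a random $(k/l)$-biased subset. I would first compute $\mathbb E[\mathbf{Inf}_U[\phi_\mathcal{I}(T)\setminus R]]$ by expanding in the Pauli basis -- using that each qubit of a fixed $\mathrm{supp}(x)$ lies in a $T$-part with probability $k/l$ independently -- and bound it well below $3\epsilon^2/8$; then apply a Chernoff/Bernstein concentration inequality and union-bound over the $\binom{l}{k}$ choices of $T$. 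The calibration $l = 24k^2$ is chosen so that, by a birthday-paradox argument, any $k+1$ designated qubits land in $k+1$ distinct parts with probability at least $1 - \binom{k+1}{2}/l \ge 1 - 1/24$, supplying the required margin in the final $5/6$ bound.

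The main obstacle is making the per-$T$ tail probability small enough to survive the $\binom{l}{k} \sim (24ek)^k$ union bound. A first-moment/Markov estimate is too weak; one either invokes higher-moment concentration or a structural argument exploiting the submodular-type inequality $W_\mathcal{I}(T_1)+W_\mathcal{I}(T_2) \le W_\mathcal{I}(T_1\cup T_2)+W_\mathcal{I}(T_1\cap T_2)$ for $W_\mathcal{I}(S) := \sum_{\mathrm{supp}(x)\subseteq \phi_\mathcal{I}(S)}|\widehat U(x)|^2$, which would force the family of ``bad'' $T$'s to be intersecting and hence small by \Cref{lemma:biased measure bound}. Unitaries whose individual qubit influences are uniformly small -- for instance $U=\prod_i e^{iZ_i/\sqrt n}$, where each $\mathbf{Inf}_U[\{i\}]$ is only $O(1/n)$ but $U$ is still $\epsilon$-far from $\mathcal{V}_k$ -- are the critical stress test: the greedy single-qubit $R$ captures little influence, so the proof must instead rely on the automatic spreading of $U$'s Pauli support across the $l-k$ complementary parts rather than on a single qubit witness.
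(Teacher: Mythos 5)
Your starting reduction is sound: by subadditivity and \Cref{lemma:far from k-juntas}, once you exhibit for each $T\subseteq[l]$ of size $k$ some $R\subseteq\phi_\mathcal{I}(T)$ with $|R|=k$ and $\mathbf{Inf}_U[\phi_\mathcal{I}(T)\setminus R]\le 3\epsilon^2/8$, you get $\mathbf{Inf}_U[\phi_\mathcal{I}(\overline T)]>\epsilon^2/8$. But what follows is a plan with a gap you yourself identify and do not close. The greedy choice of $R$ (top $k$ qubits by singleton influence) genuinely fails on the spread-out example you raise, and the ``higher-moment concentration plus union bound over $\binom{l}{k}$ sets'' route is left as an aspiration: no moment bound is computed, and a first-moment bound does not survive the $(24ek)^k$ union bound, as you note.

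The alternative you gesture at --- an intersecting family controlled by \Cref{lemma:biased measure bound} --- is in fact the mechanism the paper uses, but the setup has to be different from what you sketch. You propose a family of ``bad'' indices $T\subseteq[l]$ and a submodularity-type inequality for $W_\mathcal{I}(S)=\sum_{\mathrm{supp}(x)\subseteq\phi_\mathcal{I}(S)}|\widehat U(x)|^2$. This does not close the argument for two reasons. First, the biased-measure bound in the paper is stated (and used) for families of subsets of $[n]$: the paper defines $\mathcal{F}_\tau=\{T\subseteq[n]:\mathbf{Inf}_U[\overline T]\le\tau\}$, shows $\mathcal{F}_{\epsilon^2/4}$ is $(k+1)$-intersecting via subadditivity and \Cref{lemma:far from k-juntas}, and then notes that the union of $k$ random parts is a $(k/l)$-biased random subset of $[n]$, so the bound $\mu_p(\mathcal{F})\le p^t$ applies directly. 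Your family lives on $[l]$ and there is no natural $p$-biased distribution there. Second, and more importantly, the $(k+1)$-intersecting bound alone is too weak: $\binom{l}{k}(k/l)^{k+1}\approx e^k/(24k)$ is not $o(1)$. The paper needs a two-case split you are missing: either $\mathcal{F}_{\epsilon^2/4}$ contains a set of size $<2k$, in which case a birthday-type argument (separating those $<2k$ qubits into distinct parts, which succeeds with probability $\ge 5/6$ for $l=24k^2$) shows no $k$ parts can cover any member of the family; or all members have size $\ge 2k$, which upgrades $\mathcal{F}_{\epsilon^2/8}$ to a $2k$-intersecting family, and then $\binom{l}{k}(k/l)^{2k}\le (ek/l)^k<1/6$ closes the union bound. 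Without this case split, the intersecting-family route you mention does not go through, so the lemma remains unproven in your proposal.
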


\begin{proof}
    For $\tau > 0$, we denote $\mathcal{F}_\tau = \{T \subseteq [n]: \mathbf{Inf}_U[\overline{T}] \le \tau\}$ as the family of subsets whose complements exhibit an influence not exceeding $\tau$. For any two sets $S,T\in \mathcal{F}_{\frac{\epsilon^2}{4}}$, the subadditivity property of influence leads to the conclusion that 
    $$\mathbf{Inf}_U[\overline{S\cap T}] = \mathbf{Inf}_U[\overline{S} \cup \overline{T}] \le \mathbf{Inf}_U[\overline{S}] + \mathbf{Inf}_U[\overline{T}] \le 2 \cdot \frac{\epsilon^2}{4} = \frac{\epsilon^2}{2}.$$
    Given that $U$ is $\epsilon$-far from quantum $k$-juntas, it follows from \Cref{lemma:far from k-juntas} that for any set $J\subseteq [n]$ with size $|J|\le k$, we have the inequality $\mathbf{Inf}_U[\overline{J}] > \frac{\epsilon^2}{2}$, which implies that $|S\cap T| > k$. Moreover, every pair of sets in $\mathcal{F}_{\frac{\epsilon^2}{4}}$ agrees with this argument, such that $\mathcal{F}_{\frac{\epsilon^2}{4}}$ constitutes a $(k+1)$-intersecting family.

    Now we consider two separate cases: either $\mathcal{F}_{\frac{\epsilon^2}{4}}$ contains a set with size less than $2k$, or $\mathcal{F}_{\frac{\epsilon^2}{4}}$ contains only sets with size at least $2k$. In the first case, let us consider a set $S \in \mathcal{F}_{\frac{\epsilon^2}{4}}$ such that $|S| < 2k$. The probability that $S$ is entirely separated by the partition $\mathcal{I}$ is at least
    \begin{equation*}
        \begin{aligned}
            \frac{l-1}{l} \cdot \frac{l-2}{l} \cdots \frac{l-(2k-1)}{l} > \left(1-\frac{2k}{l}\right)^{2k} > 1-2k\cdot \left(\frac{2k}{l}\right) = \frac{5}{6},
        \end{aligned}
    \end{equation*}
    where the second inequality is derived from the binomial theorem. For every set $T\in \mathcal{F}_{\frac{\epsilon^2}{4}}$, it holds that $|S \cap T| \ge k+1$. Consequently, when $S$ is entirely separated by the partition $\mathcal{I}$, no set $T$ in $\mathcal{F}_{\frac{\epsilon^2}{4}}$ can be covered by a union of $k$ parts in $\mathcal{I}$, i.e., for every set $S \subseteq [l]$ of size $k$, we have $\mathbf{Inf}_U[\phi_\mathcal{I}(\overline{S})] > \frac{\epsilon^2}{4} > \frac{\epsilon^2}{8}$. Therefore, with probability at least $\frac{5}{6}$, the unitary $U$ is $\frac{\epsilon^2}{8}$-from every quantum $k$-part junta with respect to $\mathcal{I}$.

    In the second case, where the family $\mathcal{F}_{\frac{\epsilon^2}{4}}$ contains only sets with size at least $2k$, we claim that $\mathcal{F}_{\frac{\epsilon^2}{8}}$ is a $2k$-intersecting family; otherwise, we could find sets $S,T \in \mathcal{F}_{\frac{\epsilon^2}{8}}$ such that $|S\cap T| < 2k$ and $\mathbf{Inf}_U[\overline{S\cap T}] \le \mathbf{Inf}_U[\overline{S}] + \mathbf{Inf}_U[\overline{T}] \le \frac{\epsilon^2}{4}$, which would contradict our assumption about the family $\mathcal{F}_{\frac{\epsilon^2}{4}}$. Let $S\subseteq [n]$ be a union of $k$ parts in $\mathcal{I}$. The set $S$ can be viewed as a random subset, formed by independently including each element of $[n]$ with probability $p = \frac{k}{l} = \frac{1}{24k} < \frac{1}{2k + 1}$. According to \Cref{lemma:biased measure bound}, we find that
    $$\Pr_{\mathcal{I}}\left[\mathbf{Inf}_U[\overline{S}] \le \frac{\epsilon^2}{8}\right] = \Pr\left[S \in \mathcal{F}_{\frac{\epsilon^2}{8}}\right] = \mu_{p}\left(\mathcal{F}_{\frac{\epsilon^2}{8}}\right) \le \left(\frac{k}{l}\right)^{2k}.$$
    
    By applying the union bound over all possible $S$, we can conclude that $U$ is $\frac{\epsilon^2}{8}$-far from every quantum $k$-part junta with respect to $\mathcal{I}$ with probability at least
    $$1-\binom{l}{k}\left(\frac{k}{l}\right)^{2k} \ge 1-  \left(\frac{el}{k}\right)^{k}\left(\frac{k}{l}\right)^{2k} = 1- \left(\frac{ek}{l}\right)^k > \frac{5}{6}.$$
\end{proof}

Consequently, with the two above lemmas, if we want to distinguish between unitaries that are $\epsilon_1$-close to some quantum $k$-junta and unitaries that are $\epsilon_2$-far from every quantum $k$-junta, it suffices to differentiate between unitaries that are $2\epsilon_1^2$-close to some quantum $k$-part junta and unitaries that are $\frac{\epsilon_2^2}{8}$-far from every quantum $k$-part junta with respect to a random partition, with a high probability. The formal statement is given in the following proposition.

\begin{proposition} \label{proposition:reduction to part juntas}
    Assume that $\mathcal{A}$ is an algorithm that provided with oracle access to unitary $U$, parameters $k \ge 1$, $\epsilon \in (0,1)$ and a partition $\mathcal{I}=\{I_1,\dots,I_l\}$ of $[n]$ into $l\ge k$ parts, satisfies the following statements, with $q(k,\epsilon,l)$ queries to $U$ and a function $r : (0,1) \rightarrow (0,1)$.
    \begin{itemize}
        \item In the case where $U$ is $\epsilon'$-close to some quantum $k$-part junta with respect to $\mathcal{I}$ and $\epsilon' \le r(\epsilon)$, the algorithm $\mathcal{A}$ returns accept with probability at least $5/6$;
        \item Conversely, if $U$ is $\epsilon$-far from every quantum $k$-part junta with respect to $\mathcal{I}$, then $\mathcal{A}$ returns reject with probability at least $5/6$.
    \end{itemize}
    It can be demonstrated that there exists an algorithm $\mathcal{A}'$, when provided with oracle access to $U$ and parameters $k \ge 1$, $\epsilon \in (0,1)$ and a function $r : (0,1) \rightarrow (0,1)$, satisfies the following statements.
    \begin{itemize}
        \item If $\mathrm{dist}(U,\mathcal{V}_k)\le \sqrt{\epsilon'/2}$ and $\epsilon' \le r(\epsilon)$, then $\mathcal{A}'$ outputs accept with probability at least $2/3$;
        \item If $\mathrm{dist}(U,\mathcal{V}_{k}) > 2\sqrt{2\epsilon}$, then $\mathcal{A}'$ outputs reject with probability at least $2/3$.
    \end{itemize}
    Furthermore, the algorithm $\mathcal{A}'$ uses $q(k,\epsilon,l)$ queries to $U$.

\end{proposition}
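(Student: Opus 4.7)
The plan is to define $\mathcal{A}'$ as the straightforward composition of a random partition step with the part-junta tester $\mathcal{A}$, then verify the two guarantees by combining \Cref{lemma:approximate be k-part} and \Cref{lemma:violate be k-part} with the assumed soundness/completeness of $\mathcal{A}$. Concretely, on input $U$, $k$, $\epsilon$ and the function $r$, the algorithm $\mathcal{A}'$ first samples a random partition $\mathcal{I} = \{I_1,\dots,I_l\}$ of $[n]$ into $l = 24k^2$ parts by independently assigning each qubit $i \in [n]$ to a uniformly random part, and then runs $\mathcal{A}$ on $U$ with parameters $k$, $\epsilon$, and $\mathcal{I}$, returning its output. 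Since $\mathcal{A}'$ only invokes $\mathcal{A}$ once on $U$, the query complexity bound $q(k,\epsilon,l)$ is immediate.

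For the completeness case, suppose $\mathrm{dist}(U,\mathcal{V}_k) \le \sqrt{\epsilon'/2}$ with $\epsilon' \le r(\epsilon)$. Applying \Cref{lemma:approximate be k-part} with $\epsilon \gets \sqrt{\epsilon'/2}$ yields that, \emph{for every} $l$-partition $\mathcal{I}$ with $l \ge k$, the unitary $U$ is $2(\sqrt{\epsilon'/2})^2 = \epsilon'$-close to some quantum $k$-part junta with respect to $\mathcal{I}$. Since this holds deterministically over the choice of $\mathcal{I}$ and $\epsilon' \le r(\epsilon)$, the assumed guarantee on $\mathcal{A}$ ensures that $\mathcal{A}'$ accepts with probability at least $5/6 \ge 2/3$.

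For the soundness case, suppose $\mathrm{dist}(U,\mathcal{V}_k) > 2\sqrt{2\epsilon}$. Applying \Cref{lemma:violate be k-part} with $\epsilon \gets 2\sqrt{2\epsilon}$ gives that, with probability at least $5/6$ over the random choice of $\mathcal{I}$, the unitary $U$ is $(2\sqrt{2\epsilon})^2/8 = \epsilon$-far from every quantum $k$-part junta with respect to $\mathcal{I}$. Conditioned on this event, the assumed guarantee on $\mathcal{A}$ ensures rejection with probability at least $5/6$. A union bound over the two independent sources of failure (a ``bad'' partition with probability at most $1/6$, and a failure of $\mathcal{A}$ on a good partition with probability at most $1/6$) yields overall rejection probability at least $1 - 1/6 - 1/6 = 2/3$, as required.

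The argument is essentially a bookkeeping exercise: the main substantive content has already been done in \Cref{lemma:approximate be k-part} and \Cref{lemma:violate be k-part}, and the only minor subtlety is matching the parameters correctly, namely verifying that $2(\sqrt{\epsilon'/2})^2 = \epsilon'$ on the close side and $(2\sqrt{2\epsilon})^2/8 = \epsilon$ on the far side, which together explain the factors $\sqrt{\epsilon'/2}$ and $2\sqrt{2\epsilon}$ appearing in the statement. I expect no real obstacle beyond ensuring that the two failure events (partition failure and tester failure) are combined cleanly by a union bound so that the final success probability exceeds $2/3$.
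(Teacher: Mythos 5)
Your proposal is correct and follows essentially the same route as the paper: sample a random $24k^2$-part partition, invoke $\mathcal{A}$, and transfer the guarantees via \Cref{lemma:approximate be k-part} and \Cref{lemma:violate be k-part} with the parameter substitutions $\epsilon \gets \sqrt{\epsilon'/2}$ and $\epsilon \gets 2\sqrt{2\epsilon}$. You are in fact slightly more careful than the paper's own write-up in observing that the completeness direction holds deterministically over $\mathcal{I}$ (so only the soundness direction incurs the extra $1/6$ partition-failure probability), but this does not change the final $2/3$ bound.
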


\begin{proof}
    The algorithm $\mathcal{A}'$ first generates a random partition $\mathcal{I}$ of $[n]$ into $l\overset{\mathrm{def}}{=} 24k^2$ parts by uniformly and independently assigning each element $i\in [n]$ to one of these parts. Subsequently, the algorithm $\mathcal{A}'$ calls upon $\mathcal{A}$ with parameters $\epsilon,k,l$ and the partition $\mathcal{I}$. According to \Cref{lemma:approximate be k-part} and \Cref{lemma:violate be k-part}, with probability at least $5/6$, the following statements are satisfied.
    \begin{enumerate} [(i)]
        \item If $\mathrm{dist}(U,\mathcal{V}_k) \le \sqrt{\epsilon'/2}$, then $U$ is $\epsilon'$-close to some quantum $k$-part junta with respect to $\mathcal{I}$;
        \item If $\mathrm{dist}(U,\mathcal{V}_{k}) > 2\sqrt{2\epsilon}$, then $U$ is $\epsilon$-far from every quantum $k$-part junta with respect to $\mathcal{I}$.
    \end{enumerate}
    Consequently, $\mathcal{A}$ will yield a response consistent with the proposition, with probability at least $5/6$, indicating that the success probability of $\mathcal{A}'$ is at least $2/3$, with $q(k,\epsilon,l)$ queries.
\end{proof}

\section{Tolerant Quantum Junta Tester } \label{section:solution}
This section presents a detailed description and analysis of our algorithm designed for quantum $k$-part junta testing. By \Cref{definition:partition quantum juntas}, it is sufficient to say that a unitary $U$ is $\epsilon$-close to some quantum $k$-part junta with respect to a partition $\mathcal{I}$ if we can find a set $S \subseteq [l]$ of size $l-k$ satisfying $\mathbf{Inf}_U[\phi_\mathcal{I}(S)] \le \epsilon$. It is costly to estimate the influences of all $\binom{l}{k} = 2^{(1+o(1))k\log l}$ subsets. We can reduce the query complexity if the same query to $U$ can be repeatedly used to estimate the influence of the set $\phi_\mathcal{I}(S)$ for different $S$. Therefore, we introduce the concept of random $\rho$-biased subset, and give the relation between the influence of  set $S$ and its $\rho$-subset influence in \Cref{subsection:random rho subset}. Then we provide an algorithm that estimates the $\rho$-subset influence of all $S\subseteq [l]$ of size $l-k$ in \Cref{subsection:estimate rho subset influence}, enabling us to determine if there exists a set $S\subseteq [l]$ of size $l-k$ satisfying $\mathbf{Inf}_U[\phi_\mathcal{I}(S)] \le \epsilon$. We accomplish our proof of \Cref{theorem:main result} in \Cref{subsection:proof of result}.

\subsection{Random \texorpdfstring{$\rho$}{}-biased Subset} \label{subsection:random rho subset}

\begin{definition}
   For $\rho \in (0,1)$ and a set $S$, we independently assign each element $i \in S$ to $S'$ with probability $\rho$, denoted by $S' \sim_\rho S$. Such a set $S'$ is called a random $\rho$-biased subset of $S$.
\end{definition}

\begin{definition}
    Given a partition $\mathcal{I} = \{I_1,\dots,I_l\}$ and a set $S\subseteq [l]$, the expected value of the influence of all random $\rho$-biased subsets of $S$, denoted by $\mathbf{E}_{S'\sim_\rho S}\left[\mathbf{Inf}_U[\phi_\mathcal{I}(S')]\right]$, is called the $\rho$-subset influence of $S$ with respect to $\mathcal{I}$.
\end{definition}

\begin{lemma} \label{lemma:rho influence and influence}
    Consider a partition $\mathcal{I} = \{I_1,\dots,I_l\}$ of $[n]$. For any set $S\subseteq [l]$, let $S'$ denote the random $\rho$-biased subset of $S$, then we have the following inequality:
    $$\frac{\rho}{3} \mathbf{Inf}_U[\phi_\mathcal{I}(S)] \le \mathbf{E}_{S'\sim_\rho S}\left[\mathbf{Inf}_U[\phi_{\mathcal{I}}(S')]\right] \le \mathbf{Inf}_U[\phi_\mathcal{I}(S)].$$
\end{lemma}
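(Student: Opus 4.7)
The plan is to expand both sides in terms of the Pauli spectrum of $U$ and swap the expectation over $S'$ with the sum over Pauli indices. For any $T \subseteq [l]$, by \Cref{definition:influence of qubits},
\begin{equation*}
\mathbf{Inf}_U[\phi_\mathcal{I}(T)] = \sum_{x \in \mathbb{Z}_4^n} |\widehat{U}(x)|^2 \cdot \mathbf{1}\!\left[\mathrm{supp}(x)\cap \phi_\mathcal{I}(T) \ne \emptyset\right].
\end{equation*}
For each Pauli index $x$, I would introduce the set $A_S(x) = \{i \in S : \mathrm{supp}(x) \cap I_i \ne \emptyset\}$ of $S$-blocks that $x$ ``touches,'' and set $m(x) = |A_S(x)|$. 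Since $S' \subseteq S$, the indicator above is $1$ for $T = S'$ precisely when $A_S(x) \cap S' \ne \emptyset$, and for $T=S$ precisely when $m(x) \ge 1$. As each element of $S$ is placed in $S'$ independently with probability $\rho$, we get $\Pr_{S'\sim_\rho S}[A_S(x)\cap S' \ne \emptyset] = 1 - (1-\rho)^{m(x)}$, so by linearity of expectation
\begin{equation*}
\mathbf{E}_{S'\sim_\rho S}\!\left[\mathbf{Inf}_U[\phi_\mathcal{I}(S')]\right] = \sum_{x \in \mathbb{Z}_4^n} |\widehat{U}(x)|^2 \bigl(1 - (1-\rho)^{m(x)}\bigr).
\end{equation*}

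From this identity both inequalities reduce to elementary bounds on $1-(1-\rho)^{m(x)}$. For the upper bound, I would observe that $1-(1-\rho)^{m(x)} \le 1$ always and equals $0$ when $m(x)=0$, so the right-hand side is at most $\sum_{x:\,m(x)\ge 1}|\widehat{U}(x)|^2 = \mathbf{Inf}_U[\phi_\mathcal{I}(S)]$. For the lower bound, I would restrict attention to $x$ with $m(x)\ge 1$ (the others contribute nothing to either side) and use the bound $1-(1-\rho)^{m} \ge 1-(1-\rho) = \rho \ge \rho/3$ for every integer $m \ge 1$, which yields
\begin{equation*}
\mathbf{E}_{S'\sim_\rho S}\!\left[\mathbf{Inf}_U[\phi_\mathcal{I}(S')]\right] \;\ge\; \frac{\rho}{3}\!\!\sum_{x:\,m(x)\ge 1}\!\!|\widehat{U}(x)|^2 \;=\; \frac{\rho}{3}\,\mathbf{Inf}_U[\phi_\mathcal{I}(S)].
\end{equation*}

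The lemma is essentially a bookkeeping identity, and there is no real obstacle; the only point requiring care is the identification of the correct indicator event: since $S'\subseteq S$, only the blocks indexed by $S$ participate in both sides, which lets $m(x)$ be defined relative to $S$ alone. Everything else is linearity of expectation plus the two one-line inequalities on $1-(1-\rho)^{m}$. Note that this argument in fact gives the slightly stronger constant $\rho$ in the lower bound; the looser $\rho/3$ stated in the lemma is adopted for convenience in later applications.
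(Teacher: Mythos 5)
Your proof is correct and takes a genuinely different route from the paper's. The paper follows the approach of \cite{blais2019tolerant}: it conditions on the size $s'=|S'|$, invokes Baranyai's theorem (\Cref{lemma:legal collection}) to extract at least $\tfrac{1}{3}\binom{s-1}{s'-1}$ pairwise disjoint covers of $S$ by $s'$-subsets, and applies subadditivity of influence to each cover; that argument relies only on the abstract monotonicity and subadditivity properties in \Cref{lemma:properties of influence}, and the floor/ceiling slack in the covering count is precisely where the constant $1/3$ enters. You instead expand $\mathbf{Inf}_U[\phi_\mathcal{I}(S')]$ in the Pauli spectrum via \Cref{definition:influence of qubits}, swap the expectation over $S'$ with the sum over $x$, and compute each contribution exactly as $|\widehat{U}(x)|^2\bigl(1-(1-\rho)^{m(x)}\bigr)$, where $m(x)$ counts the blocks indexed by $S$ that $\mathrm{supp}(x)$ meets; your identification of the indicator event (that $S'\subseteq S$ lets $m(x)$ be defined relative to $S$ alone) is the right observation and is handled correctly. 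From that identity both inequalities follow from the elementary bounds $0\le 1-(1-\rho)^{m}\le 1$ and $1-(1-\rho)^{m}\ge\rho$ for $m\ge 1$, and your lower bound is in fact $\rho\,\mathbf{Inf}_U[\phi_\mathcal{I}(S)]$, strictly stronger than the stated $\tfrac{\rho}{3}\,\mathbf{Inf}_U[\phi_\mathcal{I}(S)]$; this is tight, with equality when every $x$ in the support of the spectrum has $m(x)\in\{0,1\}$ (e.g.\ $|S|=1$). What your argument buys is a cleaner derivation and a sharper constant by exploiting the explicit Fourier structure of influence; what the paper's argument buys is generality, since it would transfer unchanged to any monotone, subadditive set function without a Fourier representation, which is why the same proof appears verbatim in the Boolean-function setting of \cite{blais2019tolerant}.
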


\Cref{lemma:rho influence and influence} describes a relation between the $\rho$-subset influence of $S$ and the influence of $S$, which is a quantum counterpart to Lemma 6.3 in \cite{blais2019tolerant}. Basically the proof of \Cref{lemma:rho influence and influence} is the same as \cite{blais2019tolerant}, for the proof only requires the two properties of influence in \Cref{lemma:properties of influence}. For completeness, we provide the proof of \Cref{lemma:rho influence and influence} in \Cref{section:proof of rho influence}.

\subsection{Estimate of The \texorpdfstring{$\rho$}{}-subset Influence} \label{subsection:estimate rho subset influence}

This section presents an algorithm to estimate the $\rho$-subset influence:  the  output $v_S^\rho$  of \Cref{algorithm:rho influence estimator} can approximate $\mathbf{E}_{S'\sim_\rho S}\left[\mathbf{Inf}_U[\phi_{\mathcal{I}}(S')]\right]$ to a sufficient accuracy for each $S\subseteq [l]$ with $|S|=l-k$.  More formally, we have \Cref{lemma:rho-bound}.

\begin{algorithm}[htb]
	\caption{$\rho$-subset Influence Estimator}
	\label{algorithm:rho influence estimator}
	\KwIn{Oracle access to $U$, partition $\mathcal{I}$ and parameters $\rho,\beta,\gamma,k,l$.}
	\KwOut{$v^\rho_S$ for each $S\subseteq [l]$ with $|S|=l-k$.}  
	\BlankLine

Set $m=\frac{C\cdot k \log l}{\gamma^2 \beta \rho (1-\rho)^k}$, where $C \ge 1$ is an absolute constant. 

\For{$i=1$ \KwTo $m$}{
Let $S^i\sim_\rho [l]$. 

Let $X^i$ represent the output of \Cref{algorithm:influence estimator} for $U$ and $\phi_{\mathcal{I}}(S^i)$.

}

Let $\mathcal{K}$ denote the multiset comprising subsets $S^1,\dots,S^m$. 

\For{each $S\subseteq [l]$ of size $l-k$}{
Let $\mathcal{K}_S \subseteq \mathcal{K}$ denote the collection of all sets $S^i \in \mathcal{K}$ such that $S^i \subseteq S$ .

Let $v^\rho_S \leftarrow \frac{1}{|\mathcal{K}_S|} \sum_{S^i \in \mathcal{K}_S} X^i$ serve as the estimate of the $\rho$-subset influence of $S$. 
}

\end{algorithm}

\begin{lemma} \label{lemma:rho-bound}
    Let $\mathcal{I}=\{I_1,\dots,I_l\}$ be a partition of $[n]$ and $S\subseteq [l]$ with $|S|=l-k$.  $S'$ is the random $\rho$-biased subset of $S$, that is, $S' \sim_\rho S$. For any $\beta,\rho \in (0,1)$ and a constant $\gamma \in (0,1)$, with probability at least $1-o(1)$, the  output $v_S^\rho$  of \Cref{algorithm:rho influence estimator} fulfill the following conditions:
    \begin{enumerate}
        \item if $\mathbf{E}_{S'\sim_\rho S}\left[\mathbf{Inf}_U[\phi_{\mathcal{I}}(S')]\right] > \frac{\rho \beta}{3}$, then 
        $$(1-\gamma) \cdot \mathbf{E}_{S'\sim_\rho S}\left[\mathbf{Inf}_U[\phi_{\mathcal{I}}(S')]\right] \le v^\rho_S \le (1+\gamma) \cdot \mathbf{E}_{S'\sim_\rho S}\left[\mathbf{Inf}_U[\phi_{\mathcal{I}}(S')]\right].$$
        \item if $\mathbf{E}_{S'\sim_\rho S}\left[\mathbf{Inf}_U[\phi_{\mathcal{I}}(S')]\right] \le \frac{\rho \beta}{4}$, then $$0 < v^\rho_S \le (1+\gamma) \frac{\rho \beta}{4}.$$
    \end{enumerate}
\end{lemma}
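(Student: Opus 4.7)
My plan is to view $v_S^\rho$ as an empirical mean that, after conditioning on $S^i \subseteq S$, is an unbiased estimator of the $\rho$-subset influence $\mu_\rho(S) \coloneqq \mathbf{E}_{S' \sim_\rho S}[\mathbf{Inf}_U[\phi_\mathcal{I}(S')]]$, and then to combine two Chernoff bounds (one for $|\mathcal{K}_S|$ and one for the conditional average of the $X^i$'s in $\mathcal{K}_S$) with a union bound over all $\binom{l}{k} \le l^k$ subsets $S$ of size $l-k$.

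The first step is a distributional observation: if $S^i \sim_\rho [l]$, then $\Pr[S^i \subseteq S] = (1-\rho)^{l-|S|} = (1-\rho)^k$, and, conditional on $S^i \subseteq S$, the law of $S^i$ coincides exactly with that of a random $\rho$-biased subset of $S$. Consequently, assuming \Cref{algorithm:rho influence estimator} calls a bounded (say $[0,1]$-valued) and unbiased sub-routine producing $X^i$ with $\mathbf{E}[X^i \mid S^i] = \mathbf{Inf}_U[\phi_\mathcal{I}(S^i)]$, each $X^i$ with $S^i \in \mathcal{K}_S$ has conditional mean $\mu_\rho(S)$, and the $X^i$'s restricted to $\mathcal{K}_S$ are i.i.d. Moreover $|\mathcal{K}_S|$ is a binomial random variable with mean $m(1-\rho)^k = C k \log l / (\gamma^2 \beta \rho)$.

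Next I would apply \Cref{fact:chernoff bound} twice. The first application shows $|\mathcal{K}_S| \ge \tfrac{1}{2} m (1-\rho)^k$ with failure probability at most $\exp\!\bigl(-\Omega\bigl(Ck \log l / (\gamma^2 \beta \rho)\bigr)\bigr)$. Conditioned on this event, for the first conclusion (where $\mu_\rho(S) > \rho \beta /3$) the multiplicative Chernoff bound applied to $|\mathcal{K}_S| \cdot v_S^\rho$ gives $|v_S^\rho - \mu_\rho(S)| \le \gamma \mu_\rho(S)$ with failure probability $\exp\!\bigl(-\Omega(\gamma^2 \mu_\rho(S) |\mathcal{K}_S|)\bigr) \le \exp(-\Omega(Ck \log l))$, which is $l^{-\omega(k)}$ for $C$ large enough. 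For the second conclusion, where $\mu_\rho(S) \le \rho \beta /4$, the relative-error bound alone is too weak; I would instead stochastically dominate $v_S^\rho$ by an empirical mean whose conditional expectation is inflated to $\rho \beta /4$, and apply the one-sided Chernoff bound to that dominating average, obtaining $v_S^\rho \le (1+\gamma) \rho \beta /4$ with the same failure exponent. Positivity $v_S^\rho > 0$ follows since $|\mathcal{K}_S| \ge 1$ w.h.p.

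A final union bound over the $\binom{l}{k} \le l^k$ choices of $S$ and the $O(1)$ Chernoff events per $S$ upgrades the pointwise failure probabilities to a $1-o(1)$ global success probability, provided the absolute constant $C$ is chosen large enough so that $Ck \log l$ dominates $k \log l = \log(l^k)$. The main obstacle I anticipate is the second regime: a standard multiplicative bound around a potentially vanishing mean $\mu_\rho(S)$ does not give a useful absolute upper bound on $v_S^\rho$, so the coupling-to-inflated-mean trick, together with a careful verification that the effective sample size $|\mathcal{K}_S|$ remains $\Omega(k \log l / (\gamma^2 \beta \rho))$ even after the $(1-\rho)^k$ survival factor, is the delicate part of the argument.
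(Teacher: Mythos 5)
Your proposal is correct and follows essentially the same route as the paper: lower-bound $|\mathcal{K}_S|$ by a Chernoff bound plus a union bound over all $\binom{l}{k}$ subsets of size $l-k$, observe that $\mathbf{E}[v_S^\rho]$ equals the $\rho$-subset influence of $S$ because conditioning $S^i \sim_\rho [l]$ on the event $S^i \subseteq S$ reproduces exactly the $\rho$-biased distribution on $S$, then apply a multiplicative Chernoff bound in each regime followed by another union bound. The only local divergence is in the low-mean regime: you propose coupling each Bernoulli $X^i$ to an inflated indicator whose conditional mean is raised to $\rho\beta/4$ and applying a one-sided Chernoff bound to the dominating average, whereas the paper applies the Chernoff tail bound around the true mean $\mu \le \rho\beta/4$ with a possibly large relative deviation $\delta$ satisfying $\delta \ge \gamma$ and $\delta\mu \ge \gamma\rho\beta/4$, yielding the same exponent $\Omega(\gamma^2\rho\beta\,|\mathcal{K}_S|)$ --- a cosmetic choice, not a different approach.
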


\begin{proof} The main idea of \Cref{algorithm:rho influence estimator} is as follows. Considering a partition $\mathcal{I} = \{I_1,\dots ,I_l\}$, we perform \Cref{algorithm:influence estimator} $m$ times for $m$ random $\rho$-biased subsets of $[l]$. Note that the expected value of the output by \Cref{algorithm:influence estimator} equals to the influence of the input set as \Cref{lemma:influence estimator} states. Then, With the $m$ outputs of \Cref{algorithm:influence estimator}, we can estimate the $\rho$-subset influences of all set $S \subseteq [l]$ of size $l-k$.

    Let $m' \overset{\mathrm{def}}{=} \frac{1}{2}(1-\rho)^k \cdot m = \frac{Ck\log l}{2\gamma^2 \beta \rho}$. We will first demonstrate that for any subset $S \subseteq [l]$ of size $l-k$, with probability at least $1-o(1)$, we have $|\mathcal{K}_S|\ge m'$. Considering a specific subset $S\subseteq [l]$ with $|S|=l-k$, for each index $i\in [m]$, let $\mathbbm{1}_{\{S^i \subseteq S\}}$ represent the indicator function, which takes the value $1$ if $S^i \subseteq S$, and $0$ otherwise. Then for each $i \in [m]$, we have $\Pr[\mathbbm{1}_{\{S^i \subseteq S\}}=1] = (1-\rho)^k$. By the Chernoff bound in \Cref{fact:chernoff bound}, we obtain the inequality:
    $$\Pr \left[ \sum_{i=1}^m \mathbbm{1}_{\{S^i \subseteq S\}} < \frac{1}{2} \cdot m(1-\rho)^k \right] \le e^{-\frac{m}{8}(1-\rho)^k} = e^{-\frac{C\cdot k\log l}{8\gamma^2 \beta \rho }} < 2^{-4k\log l},$$
    with a suitable choice of $C \ge 1$. By applying the union bound over all $\binom{l}{l-k} = \binom{l}{k}= 2^{(1+o(1))k\log l}$ sets, it holds that for every set $S\subseteq [l]$ of size $l-k$, we have $|\mathcal{K}_S| \ge m'$, with probability at least $1-o(1)$.

    By the definition of $v^\rho_S$ and \Cref{lemma:influence estimator}, we have
    \begin{equation} \label{eq:rho expect influence}
        \mathbf{E}[v^\rho_S] = \mathbf{E}_\mathcal{K}\left[\frac{1}{|\mathcal{K}_S|} \sum_{S^i \in \mathcal{K}_S} \mathbf{E}[X^i] \right] = \sum_{S'\subseteq S} \Pr[S'\in \mathcal{K}] \cdot \mathbf{Inf}_U[\phi_\mathcal{I}(S')] = \mathbf{E}_{S'\sim_\rho S}\left[\mathbf{Inf}_U[\phi_{\mathcal{I}}(S')]\right].
    \end{equation}
    
    Now considering a set $S$ with $\mathbf{E}_{S'\sim_\rho S}[\mathbf{Inf}_U[\phi_{\mathcal{I}}(S')]] > \frac{\rho \beta}{3}$, by a Chernoff bound and \Cref{eq:rho expect influence}, it can be concluded that
    \begin{equation*}
        \begin{aligned}
         \Pr\left[|v^\rho_S-\mathbf{E}_{S'\sim_\rho S}[\mathbf{Inf}_U[\phi_{\mathcal{I}}(S')]]| > \gamma \mathbf{E}_{S'\sim_\rho S}[\mathbf{Inf}_U[\phi_{\mathcal{I}}(S')]]\right] &\le 2e^{-\frac{|\mathcal{K}_S| \gamma^2 \cdot \mathbf{E}_{S'\sim_\rho S}[\mathbf{Inf}_U[\phi_{\mathcal{I}}(S')]]}{3}},  \\
         &\le 2e^{-\frac{m'\gamma^2 \rho \beta}{9}},\\
         &< 2^{-4k\log l},
        \end{aligned}
    \end{equation*}
    once more for an appropriate selection of $C \ge 1$. Taking the union bound over all sets $S\subseteq [l]$ of size $l-k$, consequently, for each $S$ where $\mathbf{E}_{S'\sim_\rho S}[\mathbf{Inf}_U[\phi_{\mathcal{I}}(S')]] > \frac{\rho \beta}{3}$, it follows that $v^\rho_S$ lies in the interval as indicated by the lemma, with probability at least $1-o(1)$.

    Let us consider a set $S \subseteq [l]$ with $|S|=l-k$ such that $\mathbf{E}_{S'\sim_\rho S}[\mathbf{Inf}_U[\phi_{\mathcal{I}}(S')]] \le \frac{\rho \beta}{4}$. Utilizing the Chernoff bound in conjunction with \Cref{eq:rho expect influence}, we have the following inequality:
    $$\Pr \left[v^\rho_S > (1+\gamma) \frac{\rho \beta}{4} \right] \le \Pr \left[ v^\rho_S > \mathbf{E}_{S'\sim_\rho S}[\mathbf{Inf}_U[\phi_{\mathcal{I}}(S')]] \right]  \le e^{-\frac{\gamma^2}{3}\frac{\rho \beta}{4}|\mathcal{K}_S|} \le e^{-\frac{\gamma^2 \rho \beta}{12}m'} < 2^{-4k \log l}.$$
    By applying the union bound across all sets $S\subseteq[l]$ with $|S|=l-k$, then for each $S$ satisfying the condition $\mathbf{E}_{S'\sim_\rho S}[\mathbf{Inf}_U[\phi_{\mathcal{I}}(S')]] \le \frac{\rho \beta}{4}$, we have $v^\rho_S \le (1+\gamma) \frac{\rho \beta}{4}$, with probability at least $1-o(1)$, as the lemma states.
\end{proof}

\begin{lemma}[\cite{chen2023testing}, Lemma 3.1] \label{lemma:influence estimator}
    Let $X$ represent the result produced by \Cref{algorithm:influence estimator} for $U\in \mathcal{U}_n$ and $S \subseteq [n]$, then we have
    $$\mathbf{E}[X] = \mathbf{Inf}_U[S].$$
\end{lemma}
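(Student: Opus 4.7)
The plan is to unfold \Cref{algorithm:influence estimator}, which, given the Choi--Jamiolkowski machinery introduced in \Cref{section:preliminaries}, should work in the natural way: prepare the CJ state $|v(U)\rangle = (U \otimes I)\tfrac{1}{\sqrt{2^n}}\sum_i |i\rangle|i\rangle$ by applying $U$ to the first half of $n$ EPR pairs, measure each EPR pair $(l,\tilde l)$ in the two-qubit Bell basis to obtain a Pauli label $x_l \in \mathbb{Z}_4$, and output $X = \mathbbm{1}[\mathrm{supp}(x) \cap S \neq \emptyset]$. With this reading, proving the lemma reduces to showing that the joint measurement outcome $x = (x_1,\dots,x_n)$ is distributed according to the Pauli spectrum distribution $\{|\widehat{U}(x)|^2\}_{x \in \mathbb{Z}_4^n}$.

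The key technical step is the ``ricochet'' identity for the maximally entangled state: the vectors $|\phi_x\rangle \coloneqq (\sigma_x \otimes I)\tfrac{1}{\sqrt{2^n}}\sum_i|i\rangle|i\rangle$ for $x \in \mathbb{Z}_4^n$ form an orthonormal basis of $(\mathbb{C}^2)^{\otimes 2n}$, and this basis is exactly the product of the $n$ two-qubit Bell bases used by the Bell measurements. Once this is in place, I would expand $U = \sum_x c_x \sigma_x$ in the Pauli basis, apply linearity to get $|v(U)\rangle = \sum_x c_x |\phi_x\rangle$, and invoke Born's rule to conclude that the outcome $x$ is observed with probability $|c_x|^2 = |\widehat{U}(x)|^2$ (the moduli agree regardless of whether the Pauli coefficient convention of \Cref{section:preliminaries} is $c_x = \widehat{U}(x)$ or $c_x = \overline{\widehat{U}(x)}$, which is all that matters here).

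Finally, taking the expectation of the indicator yields
\[
\mathbf{E}[X] \;=\; \Pr_x\!\bigl[\mathrm{supp}(x)\cap S \neq \emptyset\bigr] \;=\; \sum_{x:\,\mathrm{supp}(x)\cap S \neq \emptyset} |\widehat{U}(x)|^2 \;=\; \mathbf{Inf}_U[S],
\]
where the last equality is \Cref{definition:influence of qubits}. The only genuine obstacle is bookkeeping: one must verify that the labelling of Bell-measurement outcomes aligns with the Pauli index in the expected way --- in particular, that the ``identity'' Bell outcome on pair $(l,\tilde l)$ corresponds to $x_l = 0$, so that the event ``some qubit of $S$ gives a non-trivial label'' coincides with the support condition in the definition of influence. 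Once this correspondence is checked, the three short steps above complete the proof.
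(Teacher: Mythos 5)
Your argument is correct and is the standard proof of this fact; note, however, that the paper does not prove \Cref{lemma:influence estimator} at all but cites it directly from \cite{chen2023testing}, so there is no in-paper proof to compare against. The two load-bearing observations you identify --- that $\{|\phi_x\rangle = (\sigma_x\otimes I)|\Phi\rangle\}_{x\in\mathbb{Z}_4^n}$ is exactly the $n$-fold product Bell basis on the pairs $(l,\tilde l)$, and that $|v(U)\rangle = \sum_x \widehat U(x)\,|\phi_x\rangle$ so Born's rule gives Pauli sampling --- are precisely what is needed, and your remark that the conjugation convention on $\widehat U(x)$ is immaterial is a correct reading of the (slightly loose) notation in \Cref{section:preliminaries}. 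One small mismatch worth flagging: \Cref{algorithm:influence estimator} measures only the $2|S|$ qubits in $S\cup\{\tilde l:l\in S\}$, whereas you describe measuring all $n$ pairs and then checking $\mathrm{supp}(x)\cap S\neq\emptyset$. These are equivalent here because the Bell basis factorizes over the pairs, so the marginal outcome distribution on the pairs indexed by $S$ is unchanged whether or not the remaining pairs are measured; the probability of observing $|\mathrm{EPR}\rangle^{\otimes|S|}$ on those pairs equals $\sum_{x:\,\mathrm{supp}(x)\cap S=\emptyset}|\widehat U(x)|^2 = 1-\mathbf{Inf}_U[S]$ in either reading, and since $X\in\{0,1\}$ this gives $\mathbf{E}[X]=\Pr[X=1]=\mathbf{Inf}_U[S]$ as claimed.
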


\begin{algorithm}[htb]
	\caption{Influence Estimator for Quantum Unitaries}
	\label{algorithm:influence estimator}
	\KwIn{Oracle access to $U,\ S \subseteq [n]$.}
	\KwOut{$X\in \{0,1\}$.}  
	\BlankLine

\begin{enumerate}
    \item Prepare the CJ state
    $$|v(U)\rangle  =\frac{1}{\sqrt{2^n}} \sum_{0\le i,j < 2^n} U[i,j] |i\rangle |j\rangle,$$
    by querying $U$ once on the $n$ EPR pairs $(l,\tilde{l}) \in [n] \times \{n+1,\dots,2n\}$.
    \item Measure the $2|S|$ qubits in the set $S\cup \{\tilde{l}:l\in S\}$ in the Bell basis and denote the post-measurement state as $|\varphi \rangle$.
    \begin{enumerate}[(a)]
        \item If the state $|\varphi\rangle$ equals $|\mathrm{EPR}\rangle^{\otimes |S|}$, return $0$;
        \item Otherwise, return $1$.
    \end{enumerate}
\end{enumerate}
\end{algorithm}

\subsection{Proof of \texorpdfstring{\Cref{theorem:main result}}{}} \label{subsection:proof of result}

    According to \Cref{proposition:reduction to part juntas}, to obtain the result in \Cref{theorem:main result}, it is adequate to examine a $l= 24k^2$ partition $\mathcal{I}$, and to confirm that \Cref{algorithm:rho junta tester} can distinguish between the following two cases with probability at least $5/6$.
    \begin{enumerate}
        \item The unitary operator $U$ is $\frac{\epsilon^2 \rho}{32}$-close to some quantum $k$-part junta with respect to $\mathcal{I}$;
        \item The unitary operator $U$ is $\frac{\epsilon^2}{8}$-far from every quantum $k$-part junta with respect to $\mathcal{I}$.
    \end{enumerate}

    With $\beta = \frac{\epsilon^2}{8}$, we suppose that $U$ is $\frac{\rho \beta}{4}$-close to some quantum $k$-part junta with respect to $\mathcal{I}$, meaning that there exists a set $S\subseteq [l]$ of size $l-k$ where $\mathbf{Inf}_U[\phi_\mathcal{I}(S)] \le \frac{\rho \beta}{4}$. Furthermore, by applying \Cref{lemma:rho influence and influence}, we find that the $\rho$-subset influence $\mathbf{E}_{S'\sim_\rho S}\left[\mathbf{Inf}_U[\phi_\mathcal{I}(S')]\right] \le \frac{\rho \beta}{4}$, and by \Cref{lemma:rho-bound}, the estimate $v^\rho_S$ does not exceed $(1+1/8)\frac{\rho \beta }{4} = \frac{9\rho \beta}{32}$, with probability at least $1-o(1)$.

    Now we consider the case in which $U$ is $\beta$-far from every quantum $k$-part junta with respect to $\mathcal{I}$, meaning that for each set $S \subseteq [l]$ of size $l-k$, we have $\mathbf{Inf}_U[\phi_\mathcal{I}(S)] > \beta$. By applying \Cref{lemma:rho influence and influence}, we obtain the inequality $\mathbf{E}_{S'\sim_\rho S}\left[\mathbf{Inf}_U[\phi_\mathcal{I}(S')]\right] \ge \frac{\rho}{3} \cdot \mathbf{Inf}_U[\phi_\mathcal{I}(S)] > \frac{\rho \beta}{3}$. Consequently, by applying \Cref{lemma:rho-bound}, we find that for every set $S\subseteq[l]$ of size $l-k$, the following holds:
    $$v^\rho_S \ge \left(1-\frac{1}{8}\right) \mathbf{E}_{S'\sim_\rho S}\left[\mathbf{Inf}_U[\phi_\mathcal{I}(S')]\right] >  \frac{9\rho \beta}{32},$$
    with probability at least $1-o(1)$. The query complexity is the number of times we invoke \Cref{algorithm:influence estimator}, i.e., $m = O\left(\frac{k \log k}{\epsilon^2 \rho (1-\rho)^k}\right)$.

\begin{algorithm}[htb]
	\caption{Tolerant Quantum Junta Tester}
	\label{algorithm:rho junta tester}
	\KwIn{Oracle access to $U$, parameters $\rho,\epsilon,k$.}
	\KwOut{Accepts if $U$ is $\frac{\sqrt{\rho}}{8}\epsilon$-close to some quantum $k$-junta, rejects if $U$ is $\epsilon$-far from every quantum $k$-junta.}  
	\BlankLine

Construct a random partition $\mathcal{I}$ of the set $[n]$ consisting of $l=24k^2$ parts by uniformly and independently including each qubit $i \in [n]$ in one of these parts.

Invoke \Cref{algorithm:rho influence estimator} utilizing the partition $\mathcal{I}$, with parameters set as $\rho = \rho$, $\beta = \epsilon^2 / 16$, $\gamma = 1/8$, $k = k$, $l=24k^2$.

\For{each $S\subseteq [l]$ of size $l-k$}{
    \If{$v^\rho_S \le \frac{9\rho \beta}{32}$}{
        \Return accept.
    }
}

\Return reject.

\end{algorithm}

 \bibliographystyle{alpha} 
\bibliography{reference}

\appendix
\section{Proof of \texorpdfstring{\Cref{lemma:rho influence and influence}}{}} \label{section:proof of rho influence}

\begin{definition} [Legal collection of covers]
    Let $S$ be a set comprising $s$ elements, and for any $s'\in [s]$, consider the family $\mathcal{F}_S^{s'}$ consists of all subsets of $S$ with size $s'$. We say that $\mathcal{C} \subseteq \mathcal{F}_S^{s'}$ is a cover of $S$ if $\bigcup_{Y \in \mathcal{C}} Y = S$. Furthermore, we say that a collection of covers $\mathscr{C}_S = \{\mathcal{C}_1,\dots, \mathcal{C}_r\}$ is a legal collection of covers for $S$ if each $\mathcal{C} \in \mathscr{C}_S$ is a cover of $S$ and these covers are mutually disjoint.
\end{definition}

We are interested in identifying a legal collection of covers for $S$ whose size is “as big as possible”. \Cref{lemma:legal collection} guarantees that there exists such a cover achieving the optimal size.

\begin{lemma} [\cite{baranyai1974factrization,blais2019tolerant}] \label{lemma:legal collection}
    For any set $S$ of $s$ elements, there exists a legal collection of covers $\mathscr{C}_S$ of size at least
    $$|\mathscr{C}_S| \ge \left\lfloor \frac{\binom{s}{s'}}{\left\lceil \frac{s}{s'} \right\rceil} \right\rfloor.$$
    Moreover, this is a tight bound.
\end{lemma}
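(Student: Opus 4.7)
The plan is to split the proof into an easy upper bound and a more delicate constructive lower bound; their combination gives both the existence statement and its tightness. For the upper bound I would argue that any cover $\mathcal{C}\subseteq\mathcal{F}_S^{s'}$ consists of $s'$-subsets whose union is $S$, so $|\mathcal{C}|\cdot s' \geq |S| = s$, giving $|\mathcal{C}|\geq \lceil s/s'\rceil$. Since the covers in a legal collection $\mathscr{C}_S$ are pairwise disjoint as families of $s'$-subsets, $\sum_{\mathcal{C}\in\mathscr{C}_S}|\mathcal{C}|\leq |\mathcal{F}_S^{s'}| = \binom{s}{s'}$, and combining the two inequalities (using that $|\mathscr{C}_S|$ is an integer) yields $|\mathscr{C}_S|\leq \lfloor \binom{s}{s'}/\lceil s/s'\rceil\rfloor$. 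No legal collection can be larger, which establishes the upper bound half of the tightness claim.

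For the matching lower bound I plan to construct a legal collection achieving this size, using Baranyai's theorem on decompositions of complete uniform hypergraphs as the main engine. When $s'$ divides $s$, every minimum cover must have size exactly $s/s'$ and, because $|\mathcal{C}|\cdot s' = s$, consists of pairwise disjoint blocks, i.e. a perfect matching of the complete $s'$-uniform hypergraph on $S$. Baranyai's theorem \cite{baranyai1974factrization} then provides a $1$-factorization of this hypergraph into $\binom{s-1}{s'-1} = \binom{s}{s'}/(s/s')$ edge-disjoint perfect matchings; taking $\mathscr{C}_S$ to be this factorization gives a legal collection of the required size.

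For $s'\nmid s$, write $s = qs' + r$ with $0 < r < s'$ so that $\lceil s/s'\rceil = q+1$. Here minimum covers are no longer matchings, since their blocks necessarily overlap in exactly $s'-r$ elements. My plan is to augment $S$ with $s'-r$ phantom elements to obtain $S^*$ with $|S^*| = (q+1)s'$, apply Baranyai to $S^*$ to obtain $\binom{(q+1)s'-1}{s'-1}$ edge-disjoint perfect matchings of $S^*$, and then convert each matching into a cover of $S$ by resolving the blocks that meet the phantom set (replacing each phantom element by an element of $S$ already used by the matching, which introduces precisely the $s'-r$ overlap a minimum cover requires). After discarding matchings that collapse to repeated $s'$-subsets, one recovers $\lfloor\binom{s}{s'}/\lceil s/s'\rceil\rfloor$ legal covers. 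Equivalently, one may invoke the generalized near-factorization form of Baranyai's theorem directly.

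The main obstacle is the non-divisibility case: one must verify both that the phantom-resolution step produces honest $s'$-subsets of $S$ whose union equals $S$ (and not some strict subset or a multiset with repetitions), and that the resulting covers remain pairwise disjoint as subfamilies of $\mathcal{F}_S^{s'}$ once the phantoms have been replaced. Rather than re-derive the combinatorial bookkeeping, I would invoke the generalized Baranyai theorem as a black box, citing \cite{baranyai1974factrization} and following the presentation of \cite{blais2019tolerant}, after which the lemma reduces to a direct verification of the definition of a legal collection of covers.
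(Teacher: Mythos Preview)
The paper does not actually prove this lemma: it is stated in the appendix with attributions to \cite{baranyai1974factrization} and \cite{blais2019tolerant} and then used as a black box in the proof of \Cref{lemma:rho influence and influence}. So there is no ``paper's own proof'' to compare against; your proposal supplies strictly more than the paper does.

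That said, your sketch is essentially the standard argument and is correct in outline. The upper bound counting argument is exactly right and already shows tightness once the lower bound is in hand. For the lower bound, the divisible case via Baranyai's $1$-factorization of $K_s^{(s')}$ is also correct. Your phantom-element construction for the non-divisible case is the natural idea, but the step where you ``discard matchings that collapse to repeated $s'$-subsets'' and still claim to retain $\lfloor\binom{s}{s'}/\lceil s/s'\rceil\rfloor$ covers is where the real work lies, and the naive version of the argument can lose too many matchings. You are right to fall back on the generalized Baranyai theorem (the version that partitions the edge set of $K_s^{(s')}$ into prescribed-size near-factors) rather than trying to push the phantom trick through by hand; that is precisely what \cite{blais2019tolerant} does, and it is also what the present paper implicitly relies on by citing both references.
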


Observe that 
$$\left\lfloor \frac{\binom{s}{s'}}{\left\lceil \frac{s}{s'} \right\rceil} \right\rfloor = \left\lfloor \frac{\frac{s}{s'}}{\left\lceil \frac{s}{s'} \right\rceil}\binom{s-1}{s'-1} \right\rfloor \ge \left\lfloor \frac{\frac{s}{s'}}{\frac{s}{s'}+1}\binom{s-1}{s'-1} \right\rfloor \ge \left\lfloor \frac{1}{2} \binom{s-1}{s'-1} \right\rfloor \ge \frac{1}{3} \binom{s-1}{s'-1}.$$
Let $s=|S|$, then we have
$$\mathbf{E}_{S' \sim_\rho S} \left[\mathbf{Inf}_U[\phi_\mathcal{I}(S')]\right] = \sum_{s'=1}^s \sum_{S'\subseteq S:|S'|=s'} \rho^{s'} (1-\rho)^{s-s'} \cdot \mathbf{Inf}_U[\phi_\mathcal{I}(S')].$$
By \Cref{lemma:legal collection},
\begin{equation*}
    \begin{aligned}
\sum_{S'\subseteq S:|S'|=s'}\mathbf{Inf}_U[\phi_\mathcal{I}(S')] &= \sum_{S'\in \mathcal{F}_S^{s'}} \mathbf{Inf}_U[\phi_\mathcal{I}(S')] \ge \sum_{\mathcal{C} \in \mathscr{C}_S} \sum_{S'\in \mathcal{C}} \mathbf{Inf}_U[\phi_\mathcal{I}(S')] \\
&\ge |\mathscr{C}_S| \cdot \mathbf{Inf}_U[\phi_\mathcal{I}(S)] \ge \frac{1}{3}\binom{s-1}{s'-1} \mathbf{Inf}_U[\phi_\mathcal{I}(S)].
    \end{aligned}
\end{equation*}
Consequently, we derive the lower bound
\begin{equation*}
    \begin{aligned}
        \mathbf{E}_{S' \sim_\rho S} \left[\mathbf{Inf}_U[\phi_\mathcal{I}(S')]\right] &\ge \sum_{s'=1}^s \rho^{s'} (1-\rho)^{s-s'} \cdot \left(\frac{1}{3}\binom{s-1}{s'-1} \mathbf{Inf}_U[\phi_\mathcal{I}(S)]\right) \\
        &=\frac{\rho}{3} \mathbf{Inf}_U[\phi_\mathcal{I}(S)] \cdot \sum_{s'=1}^s \binom{s-1}{s'-1} \rho^{s'-1} (1-\rho)^{s-s'}\\
        &=\frac{\rho}{3} \mathbf{Inf}_U[\phi_\mathcal{I}(S)] \cdot [\rho + (1-\rho)]^{s-1}\\
        &=\frac{\rho}{3} \mathbf{Inf}_U[\phi_\mathcal{I}(S)].
    \end{aligned}
\end{equation*}

As for the upper bound, due to the monotonicity of influence, it follows that $\mathbf{Inf}_U[\phi_\mathcal{I}(S')] \le \mathbf{Inf}_U[\phi_\mathcal{I}(S)]$ for any $S' \subseteq S$. Then the upper bound $\mathbf{E}_{S' \sim_\rho S} \left[\mathbf{Inf}_U[\phi_\mathcal{I}(S')]\right] \le \mathbf{Inf}_U[\phi_\mathcal{I}(S)]$ can be easily derived.

\end{document}